\algnewcommand\algorithmicforeach{\textbf{for each}}
\def\ca#1{{\mathcal{#1}}}
  \def\apxmark{}
  \def\apxmark{{\bf$\!$*}}
\title{Automorphisms of Set Families and of Families of Cliques in an Interval Graph in FPT Time}
\titlerunning{Automorphisms of Clique Families in an Interval Graph in FPT}
\author{Deniz A\u{g}ao\u{g}lu \c{C}a\u{g}{\i}r{\i}c{\i}}{Masaryk University, Brno, Czech Republic}{agaoglu@mail.muni.cz}{https://orcid.org/0000-0002-1691-0434}{}
\author{Petr Hlin\v en\'y}{Masaryk University, Brno, Czech Republic}{hlineny@fi.muni.cz}{https://orcid.org/0000-0003-2125-1514}{}
\authorrunning{D.~A\u{g}ao\u{g}lu and P.~Hlin\v en\'y}
\keywords{set family; interval graph; PQ-tree; automorphism group}
\begin{document}
	
	\maketitle
	
	\begin{abstract}
		We consider the following problem closely related to graph isomorphism.
		In a simplified version, the task is to compute the automorphism group of a given set family (or a hypergraph),
		that is, the group of all automorphisms of the given sets which are compatible with some permutation of their elements.
		In a general setting, the set family in question is a collection of cliques (called marked cliques) of a given interval graph,
		and the task is to compute the group of all permutations of the cliques which result from some automorpism of the underlying interval graph.
		This problem is obviously at least as hard as the graph isomorphism (GI-hard) already in the simplified version --
		consider the set family of edges of a graph, and we give an FPT-time algorithm parameterized by the
		maximum number of sets in the family which are incomparable by inclusion (its antichain size).

		To our best knowledge, the general version of the problem has not been formulated in the literature so far.
		The problem has been inspired by the research of special cases of the isomorphism problem of chordal graphs; namely,
		the simplified set-family version is the core of our FPT algorithm for the isomorphism of so-called $S_d$-graphs [MFCS 2021],
		and the general version extends and improves a cumbersome technical step in our FPT algorithm for the isomorphism of chordal graphs of bounded leafage [WALCOM 2022].
		The new algorithm combines two classical tools -- PQ-trees of interval graphs and Babai's tower-of-groups, in a nontrivial way.
	\end{abstract}

\section{Introduction}\label{introduction}
	
	The \emph{graph isomorphism problem} is to determine whether the two given graphs are isomorphic, denoted by $G\simeq H$;
	i.e., to decide whether there exists a bijection between the vertex sets, $f:V(G)\to V(H)$, such that $f$ preserves the edges,
	$\{u,v\}\in E(G)$ $\iff$ $\{f(u),f(v)\}\in E(H)$ for all~$u,v\in V(G)$.
	All isomorphisms of a graph $G$ onto itself form a permutation group called the {\em automorphism group of~$G$}.
	
	Graph isomorphism is in a sense a quite special problem in computer science;
	on one hand, under some widely-believed complexity-theoretic assumptions, it can be shown that
	graph isomorphism is not an NP-hard problem, while on the other hand,
	a polynomial-time algorithm for graph isomorphism is still elusive (and not
	everybody expects existence of such algorithm).
	It has actually defined its own complexity class \emph{GI} of the problems
	which are reducible in polynomial time to graph isomorphism.
	The current state of the art is a quasi-polynomial algorithm of Babai~\cite{DBLP:conf/stoc/Babai16}.
	Nevertheless, the problem has been shown to be solvable efficiently for various
	natural graph classes such as trees, planar and interval graphs \cite{AHU,planarLinear,recogIntervalLinear}.

	For a set family $\ca X\subseteq 2^M$, the {\em automorphism group of~$\ca X$} is the group of all permutations $\sigma$ of $\ca X$
	for which there exists a permutation $\pi$ of $M$ such that $\sigma(X)=\pi(X)$ holds for all~$X\in\ca X$.
	If there are no restrictions on the considered permutations of the ground set~$M$, the condition on $\sigma$ can be simply translated
	as that $\sigma$ preserves the cardinalities of all intersections of sets from $\ca X$ (cf.~Lemma~\ref{lem:preciseVenn}).
	Note that this problem is at least as hard as computing the automorphism group of a graph (that is, {\em GI-hard\/});
	simply take the edge set of the graph as the considered family $\ca X$.
	In our approach, we focus on such instances in which the maximum number of inclusion-incomparable sets in $\ca X$, i.e., the {\em antichain size of~$\ca X$}, is a parameter.

	In an extended view of the problem, we have a permutation group $\Gamma$ over the ground set~$M$, and we ask to compute the group $\Delta$
	of all permutations $\sigma$ of $\ca X$ for which there exists $\pi\in\Gamma$ such that $\sigma(X)=\pi(X)$ holds for all~$X\in\ca X$.
	We call such $\Delta$ the {\em action of $\Gamma$ on $\ca X$}.
	As the example of $\ca X$ being the edge set of a graph shows, the computation of $\Delta$ is nontrivial even if we can efficiently compute with the group~$\Gamma$.
	We are here interested in the case that $\Gamma$ is the automorphism group of some graph (and then $\ca X$ are subsets of vertices of this graph, hereafter called marked sets)
	and, in particular, of an interval graph for which the automorphism group $\Gamma$ can be easily computed~\cite{recogIntervalLinear,AutMPQtrees}.

\subsubsection*{Problem definition and main result}

	A graph $G$ is an \emph{interval graph} if the vertex set of $G$ can be	mapped into some set of intervals on the real line such that two vertices of
	$G$ are adjacent if and only if the corresponding intervals intersect.
	
	\begin{definition}[Action of the marking-preserving automorphism group of an interval graph]\label{def:autommarked}~\rm\\
		The problem {\sc AutomMarkedINT}$(G;\,\ca A^1,\ldots,\ca A^m)$ is defined as follows:
		\begin{description}
			\item[Input] An interval graph $G$, and families $\ca A^1,\ldots,\ca A^m$ of nonempty subsets of $V(G)$ such that
			every set $A\in\ca A^i$ where $i\in\{1,\ldots,m\}$ induces a clique of~$G$.
			The sets $A\in\ca A^1\cup\ldots\cup\ca A^m$ are called the {\em marked sets} of $G$, and specially,
			$A\in\ca A^i$ is called a {\em marked set of color~$i$}.
			\item[Task] Compute the group $\Gamma$ of such permutations of $\ca A:=\ca A^1\cup\ldots\cup\ca A^m$ that are the
			{\em actions of the $(\ca A^1,\ldots,\ca A^m)$-preserving automorphisms} of~$G$.
			In other words, a permutation $\tau$ of $\ca A$ belongs to $\Gamma$, if and only if there exists an automorphism $\varrho$ of $G$
			such that, for every $i\in\{1,\ldots,m\}$ and all $A\in\ca A^i$, we have~$\tau(A)\in\ca A^i$ and~$\tau(A)=\varrho(A)$.
		\end{description}
		To stay on the more general side,
		we consider set families in the {\em multiset} setting, meaning that the same set $A$ may occur in a family $\ca A^i$ multiple times.
		In regard of the above stated condition, a permutation $\tau$ of $\ca A$ then obviously has to map $A\in\ca A$ to a set $\tau(A)=A'\in\ca A$
		such that the multiplicities of $A$ and of $A'$ in each of $\ca A^1,\ldots,\ca A^m$ are the same.
	\end{definition}

	Regarding computational complexity, we remark that by `computing a~group' we mean to output a set of its generators which is at most polynomially large by~\cite{furst}
	(while the permutation group itself is often exponentially large compared to the ground set). 
	It is also important to mention what is the input size of an {\sc AutomMarkedINT}$(G;\,\ca A^1,\ldots,\ca A^m)$ instance.
	If $G$ is an $n$-vertex graph, the number of sets in $\ca A$ may be up to exponential in~$n$.
	However, considering our parameter $a$ equal to the maximum antichain size of~$\ca A$, we easily get that there are at most $an$ distinct sets in $\ca A$ (at most $a$ of each cardinality between $1$ and~$n$).
	Hence we can always upper-bound the input size by $\big(|V(G)|+|E(G)|+\sum_{A\in\ca A}|A|\big) ~\in~ \ca O(a\cdot|V(G)|^2)$.
	
	\begin{theorem}\label{thm:autommarked}
		The problem {\sc AutomMarkedINT}$(G;\,\ca A^1,\ldots,\ca A^m)$ is solvable in FPT-time with respect to the parameter $a$ which is the maximum antichain size of $\ca A:=\ca A^1\cup\ldots\cup\ca A^m$.
	\end{theorem}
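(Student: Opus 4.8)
The plan is to combine two classical tools, as announced in the abstract: the canonical PQ-tree of the interval graph $G$, which encodes $\mathrm{Aut}(G)$, and Babai's tower-of-groups method, which will glue "local" automorphism data along that tree. The simplified set-family case — where the ground set may be permuted freely, handled by tracking intersection cardinalities as in Lemma~\ref{lem:preciseVenn} — serves as the base case of the recursion.

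First I would compute, in polynomial time, the PQ-tree $T$ of the maximal cliques of $G$ together with a generating set of $\mathrm{Aut}(G)$ expressed as acting on $T$ \cite{recogIntervalLinear,AutMPQtrees}: every automorphism of $G$ is a composition of permutations of isomorphic children at the P-nodes of $T$, reversals at the Q-nodes, and permutations inside the vertex (twin) classes attached to the nodes of the MPQ-tree. Next I would localize the marked sets. Since each $A\in\mathcal A$ induces a clique, the maximal cliques containing $A$ form a consecutive block in every valid clique ordering (Helly), so $A$ determines a well-defined \emph{position} in $T$ — a node, a consecutive interval of children of a Q-node, or a sub-clique inside a single leaf — and every marking-preserving automorphism of $G$ permutes these positions. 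Moreover, the marked sets at one position are all subsets of a single, canonically determined clique $L$ (the vertices shared by all maximal cliques of that block), so any marking-preserving automorphism that fixes the position acts on $L$ by some realizable permutation.

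The parameter $a$ enters through the following structural bound: at a fixed position the marked sets form a subposet of $(\mathcal A,\subseteq)$, hence have width at most $a$, and since a chain of cliques inside $L$ has length at most $|V(G)|$, each position carries only $\mathcal O(a\cdot|V(G)|)$ distinct marked sets (matching the overall input bound). For the innermost case, where $L$ splits into twin classes that realizable automorphisms may permute freely (up to a small extension), finding the marking-preserving permutations of $L$ and their action on the local family reduces — via intersection-cardinality bookkeeping within classes, Lemma~\ref{lem:preciseVenn} — to a bounded number of instances of the simplified set-family problem, each with antichain size still at most $a$, hence FPT.

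Finally I would assemble the global answer by traversing $T$ from the leaves to the root with a tower of groups. For a node $\nu$ I maintain the group of marking-preserving realizable permutations of all marked sets located in the subtree of $\nu$, together with a bounded description of that subtree as seen from outside (its isomorphism type and its interface vertices). At $\nu$ I then (i) sort children into isomorphism types; (ii) form the natural symmetry group of $\nu$ as a product over the children's groups, extended by permutations of equivalent children (P-node) or by a reversal (Q-node), and by the local action on $\nu$'s own vertices and on the sub-clique positions at $\nu$; and (iii) cut this group down to the subgroup that respects the colors and multiplicities of every marked set whose position is at or crosses $\nu$, using the tower-of-groups bookkeeping so that each extension step ranges over an FPT-bounded set of possibilities. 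The root then yields a generating set of $\Gamma$. I expect step (iii) at P-nodes with many isomorphic children to be the main obstacle: restricting the wreath-type product of the children's symmetries to the marking-preserving subgroup is GI-hard in general, and it is precisely the antichain bound $a$ that rescues it — a marked set crossing a P-node must cover whole children, so only few marked sets are entangled across the children and the tower steps stay tractable. Verifying that the maintained interface data is rich enough to match siblings correctly and to lose no realizable automorphism is the delicate point of the argument.
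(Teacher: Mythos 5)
Your plan has the right ingredients (canonical PQ-tree, localization of marked cliques, Babai's tower-of-groups, exploiting the antichain bound), but it takes a genuinely different and, as written, incomplete route from the paper's.

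The paper does not do a bottom-up traversal of $T$ maintaining a group at each node. Instead it proves a single reduction lemma (Lemma~\ref{lem:markedtosets}): strip from $T$ all \emph{clean} subtrees (those whose inner vertices avoid $\bigcup\mathcal A$), record their canonical PQ-trees as annotations of the surviving nodes, and introduce one new marked set $C_q$ per surviving node $q$ (the union of all descendant cliques of $q$), colored by the annotation. The crucial new claim is that the augmented family $\mathcal A\cup\mathcal C$ still has antichain size exactly $a$ — this is proved by an exchange argument, replacing each $C_q$ in an antichain by a set $A_q\in\mathcal A$ witnessing that $q$ is non-clean. Then the entire problem becomes a \emph{single flat} instance of {\sc AutomSET} on $V(G)$, and Theorem~\ref{thm:automsetfam} (one run of the tower-of-groups machinery) finishes it. The tree structure is recovered purely from inclusions among the $C_q$'s; no per-node group combination is needed.

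Your bottom-up scheme runs into exactly the problems you flag yourself, and those are genuine gaps rather than routine omissions. Restricting a wreath product $S_k\wr\Gamma_{\mathrm{child}}$ at a P-node to the marking-preserving subgroup is the hard step, and you do not show that the index of each refinement is bounded by a function of $a$ (which is what makes Theorem~\ref{thm:furstgen} applicable). The paper gets this for free because after the reduction it only ever works with permutations of subfamilies $\mathcal U^j_c$ of size $\leq a$, so the index is at most $(a!)^a$; in your scheme the group at a P-node acts on $k$ children and $k$ is not bounded by $a$, so you would have to argue that the marking constraints decouple most children — which you assert but do not prove. Also, ``verifying that the maintained interface data is rich enough'' is precisely the content you would need to establish; the paper's annotations (canonical PQ-trees of clean subtrees, Q-node positions, level/rank counts of inner vertices per marked set) are carefully chosen to be complete and isomorphism-invariant, and this is nontrivial.

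One factual slip: a marked set $A$ that ``crosses'' a P-node does not ``cover whole children.'' Since $A$ induces a clique, $A$ is contained in a \emph{single} maximal clique of $G$, hence in a single descendant leaf of that P-node. What is true — and what the paper actually uses — is that the nodes whose inner vertices $A$ meets form a root-to-leaf path, and if $A$ meets an inner vertex of $q$ then $A\subseteq C_q$. This weaker locality is enough for the paper's antichain-preservation argument, but it would not by itself justify the claim that ``only few marked sets are entangled across the children'' in the way your P-node step needs.

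In summary: correct toolkit, different architecture, and the two hardest pieces of your architecture (the antichain-preservation under the tree reduction, and the FPT-bounded index at P-node refinement steps) are left as acknowledged obstacles rather than resolved. The paper's flat reduction of Lemma~\ref{lem:markedtosets} is precisely what sidesteps both.
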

	
	In the course of proving \Cref{thm:autommarked}, we combine classical PQ-trees for capturing the internal structure of interval graphs \cite{recogIntervalLinear} (\Cref{sec:intPQ}),
	and a colored extension of the algorithm for computing the automorphism group of a set family of bounded antichain size from \cite{aaolu2019isomorphism} which is based
	on another classical tool -- Babai's tower-of-groups procedure~\cite{babai-bdcm} (\Cref{sec:autsets}).
	The full proof is finished in \Cref{sec:setstoint}.
	
	We add that our algorithm solving \Cref{thm:autommarked} does not first compute the $(\ca A^1,\ldots,\ca A^m)$-preserving automorphism group of~$G$, not even implicitly;
	with our approach it is easier this way.
	However, the algorithm can be straightforwardly extended to compute also the full $(\ca A^1,\ldots,\ca A^m)$-preserving automorphism group of~$G$ in FPT time with~respect~to~$a$.

\subsubsection*{Motivation of the problem}

	The problem given by Definition~\ref{def:autommarked} is not artificial, but naturally follows from our recent research of special cases of the isomorphism problem of chordal graphs.
	In detail, the simplified set-family version of it (with no colors and no underlying graph) is implicitly used and solved
	in the core of an FPT algorithm for the isomorphism of so-called $S_d$-graphs~\cite{aaolu2019isomorphism}.
	A complicated extension of the algorithm of~\cite{aaolu2019isomorphism} was used recently in~\cite{DBLP:conf/walcom/CagiriciH22}
	to solve the isomorphism problem of chordal graphs of bounded leafage also in FPT.
	Subsequently to that, we have formulated and solved the problem of Definition~\ref{def:autommarked},
	which is among other things a handy replacement and rigorous improvement over the cumbersome technical part of the algorithm~\cite{DBLP:conf/walcom/CagiriciH22}.

	To informally explain the mentioned use of the algorithm of \Cref{thm:autommarked}, we summarize that \cite{DBLP:conf/walcom/CagiriciH22}
	defines and proves a canonical (i.e., isomorphism-invariant) decomposition procedure of a chordal graph into a collection of interval graphs over clique-cutsets.
	The condition of bounded leafage of the graph implies that these clique-cutsets in every component of the decomposition have bounded-size antichains by inclusion.
	Since isomorphism of interval graphs (the decomposed components) is well-understood (\Cref{sec:intPQ}), it then remains 
	to efficiently handle the actions of automorphisms of the components on their clique-cutsets to finish the algorithm 
	for isomorphism of chordal graphs of bounded leafage (see \cite{DBLP:conf/walcom/CagiriciH22} for more details).

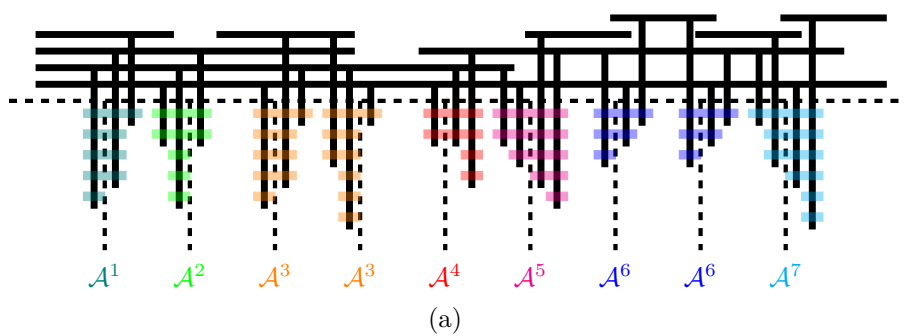
\begin{figure}[tb]
		\centering
		\begin{subfigure}[t]{1\linewidth}
			\centering
			\begin{tikzpicture}[xscale=1.4,yscale=1.1]
				
				\draw[black, dashed, ultra thick] (-4.25,0) -- (4.25,0);
				
				\draw[black, dashed, ultra thick] (-3.35,0) -- (-3.35,-1.8);
				\draw[black, dashed, ultra thick] (-2.55,0) -- (-2.55,-1.8);
				\draw[black, dashed, ultra thick] (-1.75,0) -- (-1.75,-1.8);
				\draw[black, dashed, ultra thick] (-0.95,0) -- (-0.95,-1.8);
				\draw[black, dashed, ultra thick] (-0.15,0) -- (-0.15,-1.8);
				\draw[black, dashed, ultra thick] (0.65,0) -- (0.65,-1.8);
				\draw[black, dashed, ultra thick] (1.45,0) -- (1.45,-1.8);
				\draw[black, dashed, ultra thick] (2.25,0) -- (2.25,-1.8);
				\draw[black, dashed, ultra thick] (3.05,0) -- (3.05,-1.8);
				
				\draw[black, line width=2.6] (-4,0.2) -- (4,0.2);
				\draw[black, line width=2.6] (-4,0.4) -- (0.5,0.4);
				\draw[black, line width=2.6] (-4,0.6) -- (-1,0.6);
				\draw[black, line width=2.6] (-0.4,0.6) -- (3.6,0.6);
				\draw[black, line width=2.6] (-4,0.8) -- (-2.7,0.8);
				\draw[black, line width=2.6] (-2.3,0.8) -- (-1,0.8);
				\draw[black, line width=2.6] (0.6,0.8) -- (1.6,0.8);
				\draw[black, line width=2.6] (2.2,0.8) -- (3.2,0.8);
				\draw[black, line width=2.6] (1.4,1) -- (2.4,1);
				\draw[black, line width=2.6] (3,1) -- (4,1);
				
				\draw[black, line width=2.6] (-3.45,0.4) -- (-3.45,-1.3);
				\draw[black, line width=2.6] (-3.25,0.6) -- (-3.25,-1.05);
				\draw[black, line width=2.6] (-3.1,0.8) -- (-3.1,-0.3);
				
				\draw [draw=teal, fill=teal, opacity=0.4] (-3.55,-0.1) rectangle (-3,-0.2);
				\draw [draw=teal, fill=teal, opacity=0.4] (-3.55,-0.35) rectangle (-3.15,-0.45);
				\draw [draw=teal, fill=teal, opacity=0.4] (-3.55,-0.6) rectangle (-3.15,-0.7);
				\draw [draw=teal, fill=teal, opacity=0.4] (-3.55,-0.85) rectangle (-3.15,-0.95);
				\draw [draw=teal, fill=teal, opacity=0.4] (-3.55,-1.1) rectangle (-3.35,-1.2);
				
				\node (1) at (-3.35,-2.1) {\textcolor{teal}{$\mathcal{A}^1$}};	
				
				\draw[black, line width=2.6] (-2.65,0.4) -- (-2.65,-1.3);
				\draw[black, line width=2.6] (-2.45,0.6) -- (-2.45,-0.55);
				\draw[black, line width=2.6] (-2.8,0.2) -- (-2.8,-0.55);
				
				\draw [draw=green, fill=green, opacity=0.4] (-2.9,-0.1) rectangle (-2.35,-0.2);
				\draw [draw=green, fill=green, opacity=0.4] (-2.9,-0.35) rectangle (-2.35,-0.45);
				\draw [draw=green, fill=green, opacity=0.4] (-2.55,-0.6) rectangle (-2.75,-0.7);
				\draw [draw=green, fill=green, opacity=0.4] (-2.55,-0.85) rectangle (-2.75,-0.95);
				\draw [draw=green, fill=green, opacity=0.4] (-2.55,-1.1) rectangle (-2.75,-1.2);	
				
				\node (2) at (-2.55,-2.1) {\textcolor{green}{$\mathcal{A}^2$}};
				
				\draw[black, line width=2.6] (-1.85,0.2) -- (-1.85,-1.3);
				\draw[black, line width=2.6] (-1.65,0.8) -- (-1.65,-1.05);
				\draw[black, line width=2.6] (-1.5,0.4) -- (-1.5,-0.3);
				
				\draw [draw=orange, fill=orange, opacity=0.4] (-1.95,-0.1) rectangle (-1.4,-0.2);
				\draw [draw=orange, fill=orange, opacity=0.4] (-1.95,-0.35) rectangle (-1.55,-0.45);
				\draw [draw=orange, fill=orange, opacity=0.4] (-1.95,-0.6) rectangle (-1.55,-0.7);
				\draw [draw=orange, fill=orange, opacity=0.4] (-1.95,-0.85) rectangle (-1.55,-0.95);
				\draw [draw=orange, fill=orange, opacity=0.4] (-1.95,-1.1) rectangle (-1.75,-1.2);
				
				\node (3) at (-1.75,-2.1) {\textcolor{orange}{$\mathcal{A}^3$}};
				
				\draw[black, line width=2.6] (-0.85,0.2) -- (-0.85,-0.3);
				\draw[black, line width=2.6] (-1.2,0.8) -- (-1.2,-0.8);
				\draw[black, line width=2.6] (-1.05,0.4) -- (-1.05,-1.55);
				
				\draw [draw=orange, fill=orange, opacity=0.4] (-0.75,-0.1) rectangle (-1.3,-0.2);
				\draw [draw=orange, fill=orange, opacity=0.4] (-1.3,-0.35) rectangle (-0.95,-0.45);
				\draw [draw=orange, fill=orange, opacity=0.4] (-1.3,-0.6) rectangle (-0.95,-0.7);
				\draw [draw=orange, fill=orange, opacity=0.4] (-1.15,-0.85) rectangle (-0.95,-0.95);
				\draw [draw=orange, fill=orange, opacity=0.4] (-1.15,-1.1) rectangle (-0.95,-1.2);
				\draw [draw=orange, fill=orange, opacity=0.4] (-1.15,-1.35) rectangle (-0.95,-1.45);
				
				\node (4) at (-0.95,-2.1) {\textcolor{orange}{$\mathcal{A}^3$}};
				
				\draw[black, line width=2.6] (-0.25,0.2) -- (-0.25,-0.55);
				\draw[black, line width=2.6] (-0.05,0.4) -- (-0.05,-0.55);
				\draw[black, line width=2.6] (0.1,0.6) -- (0.1,-1.05);
				
				\draw [draw=red, fill=red, opacity=0.4] (-0.35,-0.1) rectangle (0.2,-0.2);
				\draw [draw=red, fill=red, opacity=0.4] (-0.35,-0.35) rectangle (0.2,-0.45);
				\draw [draw=red, fill=red, opacity=0.4] (0,-0.6) rectangle (0.2,-0.7);
				\draw [draw=red, fill=red, opacity=0.4] (0,-0.85) rectangle (0.2,-0.95);
				
				\node (5) at (-0.15,-2.1) {\textcolor{red}{$\mathcal{A}^4$}};
				
				\draw[black, line width=2.6] (0.55,0.2) -- (0.55,-0.8);
				\draw[black, line width=2.6] (0.4,0.4) -- (0.4,-0.55);
				\draw[black, line width=2.6] (0.75,0.8) -- (0.75,-1.05);
				\draw[black, line width=2.6] (0.9,0.6) -- (0.9,-1.3);
				
				\draw [draw=magenta, fill=magenta, opacity=0.4] (0.3,-0.1) rectangle (1,-0.2);
				\draw [draw=magenta, fill=magenta, opacity=0.4] (0.3,-0.35) rectangle (1,-0.45);
				\draw [draw=magenta, fill=magenta, opacity=0.4] (0.45,-0.6) rectangle (1,-0.7);
				\draw [draw=magenta, fill=magenta, opacity=0.4] (0.65,-0.85) rectangle (1,-0.95);
				\draw [draw=magenta, fill=magenta, opacity=0.4] (0.8,-1.1) rectangle (1,-1.2);
				
				\node (6) at (0.65,-2.1) {\textcolor{magenta}{$\mathcal{A}^5$}};
				
				\draw[black, line width=2.6] (1.35,0.6) -- (1.35,-0.8);
				\draw[black, line width=2.6] (1.55,0.2) -- (1.55,-0.55);
				\draw[black, line width=2.6] (1.7,1) -- (1.7,-0.3);
				\draw [draw=blue, fill=blue, opacity=0.4] (1.25,-0.1) rectangle (1.8,-0.2);
				\draw [draw=blue, fill=blue, opacity=0.4] (1.25,-0.35) rectangle (1.65,-0.45);
				\draw [draw=blue, fill=blue, opacity=0.4] (1.25,-0.6) rectangle (1.45,-0.7);
				
				\node (7) at (1.45,-2.1) {\textcolor{blue}{$\mathcal{A}^6$}};
				
				\draw[black, line width=2.6] (2.5,0.6) -- (2.5,-0.3);
				\draw[black, line width=2.6] (2.35,0.2) -- (2.35,-0.55);
				\draw[black, line width=2.6] (2.15,1) -- (2.15,-0.8);
				\draw [draw=blue, fill=blue, opacity=0.4] (2.05,-0.1) rectangle (2.6,-0.2);
				\draw [draw=blue, fill=blue, opacity=0.4] (2.05,-0.35) rectangle (2.45,-0.45);
				\draw [draw=blue, fill=blue, opacity=0.4] (2.05,-0.6) rectangle (2.25,-0.7);
				
				\node (8) at (2.25,-2.1) {\textcolor{blue}{$\mathcal{A}^6$}};
				
				\draw[black, line width=2.6] (2.8,0.6) -- (2.8,-0.3);
				\draw[black, line width=2.6] (2.95,0.8) -- (2.95,-0.8);
				\draw[black, line width=2.6] (3.15,0.2) -- (3.15,-1.05);
				\draw[black, line width=2.6] (3.3,1) -- (3.3,-1.55);
				\draw [draw=cyan, fill=cyan, opacity=0.4] (2.7,-0.1) rectangle (3.4,-0.2);
				\draw [draw=cyan, fill=cyan, opacity=0.4] (2.85,-0.35) rectangle (3.4,-0.45);
				\draw [draw=cyan, fill=cyan, opacity=0.4] (2.85,-0.6) rectangle (3.4,-0.7);
				\draw [draw=cyan, fill=cyan, opacity=0.4] (3.05,-0.85) rectangle (3.4,-0.95);
				\draw [draw=cyan, fill=cyan, opacity=0.4] (3.2,-1.1) rectangle (3.4,-1.2);
				\draw [draw=cyan, fill=cyan, opacity=0.4] (3.2,-1.35) rectangle (3.4,-1.45);
				
				\node (9) at (3.05,-2.1) {\textcolor{cyan}{$\mathcal{A}^7$}};
				
				\node (11) at (-0.15,-2.65) {(a)};	
				\label{a}
			\end{tikzpicture}
			
			\label{fig:UDVGd}
		\end{subfigure}
		
		\caption{(a) An interval graph (horizontally) depicted with black intervals, and the families $\mathcal{A}^1, \dots, \mathcal{A}^7$ 
			of marked sets (subsets of black intervals) which are used to ``mark'' those intervals which ``branch'' from the base interval line at depicted branching points.
			The marked sets belonging to a particular family $\ca A^i$ are depicted by the colored horizontal strips covering the black intervals belonging to those sets
			(e.g., the family $\ca A^1$ consists of three distinct sets, among which one occurs with multiplicity three), and each one forms a clique.}
		\label{fig:terminalSets}
	\end{figure}

	Even more informally, the previous can be illustrated by a picture in~\Cref{fig:terminalSets}, in which the interval
	graph represents a linear part of the clique-tree representation of a chordal graph, and the clique-cutsets separate
	the part from other ``branching'' parts of the clique-tree.

	\section{Interval Graphs and PQ-trees}\label{sec:intPQ}
	
	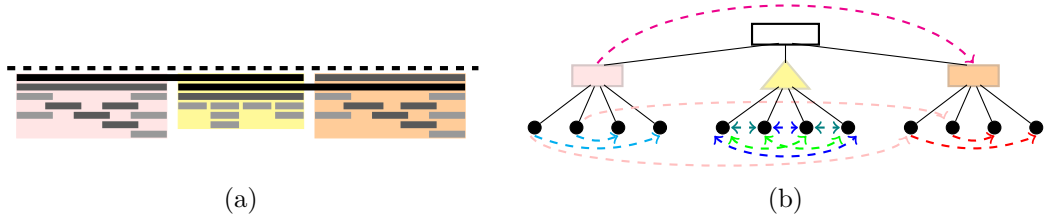
\begin{figure}[tb]
		\centering
		\begin{subfigure}[t]{0.47\linewidth}
			\centering
			\begin{tikzpicture}[xscale=0.5,yscale=-0.5]
				
				\draw[black, dashed, ultra thick] (-7,-4.5) -- (5.4,-4.5);
				
				\draw [draw=pink, fill=pink, opacity=0.4] (-6.75,-4.35) rectangle (-2.8,-2.65);
				
				\draw [draw=yellow, fill=yellow, opacity=0.4] (-2.5,-4.35) rectangle (0.8,-2.9);
				
				\draw [draw=orange, fill=orange, opacity=0.4] (1.1,-4.35) rectangle (5.05,-2.65);
				
				\draw[black, line width=2.6] (-6.75,-4.25) -- (0.8,-4.25); 
				
				\draw[black!65!white, line width=2.6] (-6.75,-4) -- (-2.8,-4); 
				
				\draw[black!40!white, line width =2.6] (-6.75,-3.75) -- (-5.8,-3.75); 
				\draw[black!40!white, line width =2.6] (-6.75,-3.25) -- (-5.8,-3.25); 
				\draw[black!65!white, line width =2.6] (-6,-3.5) -- (-5.05,-3.5); 
				\draw[black!65!white, line width =2.6] (-5.25,-3.25) -- (-4.3,-3.25); 
				\draw[black!65!white, line width =2.6] (-4.5,-3.5) -- (-3.55,-3.5); 
				\draw[black!65!white, line width =2.6] (-4.5,-3) -- (-3.55,-3); 
				\draw[black!40!white, line width =2.6] (-3.75,-3.75) -- (-2.8,-3.75); 
				\draw[black!40!white, line width =2.6] (-3.75,-3.25) -- (-2.8,-3.25); 
				\draw[black!40!white, line width =2.6] (-3.75,-2.75) -- (-2.8,-2.75); 
				
				\draw[black, line width=2.6] (-2.5,-4) -- (5.05,-4); 
				
				\draw[black!65!white, line width =2.6] (-2.5,-3.75) -- (0.8,-3.75); 
				
				\draw[black!40!white, line width =2.6] (-2.5,-3.5) -- (-1.75,-3.5); 
				\draw[black!40!white, line width =2.6] (-1.65,-3.5) -- (-0.9,-3.5); 
				\draw[black!40!white, line width =2.6] (-0.8,-3.5) -- (-0.05,-3.5); 
				\draw[black!40!white, line width =2.6] (0.05,-3.5) -- (0.8,-3.5); 
				
				\draw[black!40!white, line width =2.6] (-1.65,-3) -- (-0.9,-3); 
				\draw[black!40!white, line width =2.6] (-1.65,-3.25) -- (-0.9,-3.25); 
				\draw[black!40!white, line width =2.6] (0.05,-3.25) -- (0.8,-3.25); 
				
				\draw[black!40!white, line width =2.6] (1.1,-3.75) -- (2.05,-3.75); 
				\draw[black!40!white, line width =2.6] (1.1,-3.25) -- (2.05,-3.25); 
				\draw[black!65!white, line width =2.6] (1.85,-3.5) -- (2.9,-3.5); 
				\draw[black!65!white, line width =2.6] (2.6,-3.25) -- (3.55,-3.25); 
				\draw[black!65!white, line width =2.6] (3.35,-3.5) -- (4.3,-3.5); 
				\draw[black!65!white, line width =2.6] (3.35,-3) -- (4.3,-3); 
				\draw[black!40!white, line width =2.6] (4.1,-3.75) -- (5.05,-3.75); 
				\draw[black!40!white, line width =2.6] (4.1,-3.25) -- (5.05,-3.25); 
				\draw[black!40!white, line width =2.6] (4.1,-2.75) -- (5.05,-2.75); 
				
				\draw[black!65!white, line width=2.6] (1.1,-4.25) -- (5.05,-4.25); 
				
				\node (11) at (-0.85,-1) {(a)};	\label{a}
			\end{tikzpicture}
			
			\label{fig:intervalG}
		\end{subfigure}
		\hfill
		\begin{subfigure}[t]{0.51\linewidth}
			\centering
			\begin{tikzpicture}[yscale=0.55,xscale=1.1]
				
				\draw [draw=black, fill=white, thick, opacity=1] (1.1,-2.5) rectangle (1.9,-2);
				
				\draw [draw=pink!60!black, fill=pink, thick, opacity=0.4] (-1.05,-3.5) rectangle (-0.45,-3); 
				
				\node[isosceles triangle,
				isosceles triangle apex angle=80,
				draw,
				rotate=90,
				color=yellow!60!black, 
				fill=yellow, 
				thick,
				opacity=0.4,
				minimum size =0.2cm] (T1)at (1.5,-3.35){};
				
				\draw [draw=orange!60!black, fill=orange, thick, opacity=0.4] (3.45,-3.5) rectangle (4.05,-3); 
				
				\node at (1.5,-2.45) [draw, fill=black, opacity=0, color=black, inner sep=0.8mm] (a) {};
				\node at (-0.75,-3.015) [draw, fill=black, opacity=0, color=black, inner sep=0.8mm] (b) {};
				\draw (a) -- (b) node[midway, above, thick]{};
				
				\node at (3.75,-3.015) [draw, fill=black, opacity=0, color=black, inner sep=0.8mm] (c) {};
				\draw (a) -- (c) node[midway, above, thick]{};
				
				\node at (1.5,-2.35) [draw, fill=black, opacity=0, color=black, inner sep=0.8mm] (d) {};
				\node at (1.5,-3.1) [draw, fill=black, opacity=0, color=black, inner sep=0.8mm] (e) {};
				\draw (d) -- (e) node[midway, above, thick]{};

				\node at (-0.75,-3.35) [draw, fill=black, opacity=0, color=black, inner sep=0.8mm] (b1) {};
				\node at (-1.5,-4.5) [circle, draw, fill=black, opacity=1, color=black, inner sep=0.6mm] (b11) {};
				\draw (b1) -- (b11) node[midway, above, thick]{};
				
				\node at (-1,-4.5) [circle, draw, fill=black, opacity=1, color=black, inner sep=0.6mm] (b12) {};
				\draw (b1) -- (b12) node[midway, above, thick]{};
				
				\node at (-0.5,-4.5) [circle, draw, fill=black, opacity=1, color=black, inner sep=0.6mm] (b13) {};
				\draw (b1) -- (b13) node[midway, above, thick]{};
				
				\node at (0,-4.5) [circle, draw, fill=black, opacity=1, color=black, inner sep=0.6mm] (b14) {};
				\draw (b1) -- (b14) node[midway, above, thick]{};

				\draw[->,color=cyan, thick,dashed] (-1.5,-4.7) .. controls (-1,-5.1) and (-0.5,-5.1) .. (0,-4.7);	
				\draw[->,color=cyan, thick,dashed] (-1,-4.7) .. controls (-0.8,-4.85) and (-0.7,-4.85) .. (-0.5,-4.7);
				
				\draw[->,color=pink, thick,dashed] (-1.55,-4.7) .. controls (-1.25,-5.65) and (2.75,-5.65) .. (2.95,-4.7);
				\draw[->,color=pink, thick,dashed] (-0.9,-4.3) .. controls (-0.7,-3.75) and (3.3,-3.75) .. (3.4,-4.3);
				
				\draw[<->,color=teal, thick,dashed] (0.85,-4.5) .. controls (1.1,-4.5) and (1.1,-4.5) .. (1.15,-4.5);	
				\draw[<->,color=teal, thick,dashed] (1.85,-4.5) .. controls (2.1,-4.5) and (2.1,-4.5) .. (2.15,-4.5);
				
				\draw[<->,color=green, thick,dashed] (0.85,-4.7) .. controls (0.95,-5.1) and (1.55,-5.1) .. (1.75,-4.7);
				\draw[<->,color=green, thick,dashed] (1.25,-4.7) .. controls (1.45,-5.1) and (2.05,-5.1) .. (2.15,-4.7);
				
				\draw[<->,color=blue, thick,dashed] (0.65,-4.7) .. controls (0.95,-5.35) and (2.05,-5.35) .. (2.35,-4.7);
				\draw[<->,color=blue, thick,dashed] (1.35,-4.5) .. controls (1.5,-4.5) and (1.5,-4.5) .. (1.65,-4.5);
				
				\draw[->,color=red, thick,dashed] (3,-4.7) .. controls (3.5,-5.1) and (4,-5.1) .. (4.5,-4.7);	
				\draw[->,color=red, thick,dashed] (3.5,-4.7) .. controls (3.7,-4.85) and (3.8,-4.85) .. (4,-4.7);
				
				\draw[->,color=magenta, thick,dashed] (-0.75,-2.95) .. controls (0,-1.1) and (3,-1.1) .. (3.75,-2.95);	
				
				\node at (1.5,-3.4) [draw, fill=black, opacity=0, color=black, inner sep=0.8mm] (b2) {};
				\node at (0.75,-4.5) [circle, draw, fill=black, opacity=1, color=black, inner sep=0.6mm] (b21) {};
				\draw (b2) -- (b21) node[midway, above, thick]{};
				
				\node at (1.25,-4.5) [circle, draw, fill=black, opacity=1, color=black, inner sep=0.6mm] (b22) {};
				\draw (b2) -- (b22) node[midway, above, thick]{};
				
				\node at (1.75,-4.5) [circle, draw, fill=black, opacity=1, color=black, inner sep=0.6mm] (b23) {};
				\draw (b2) -- (b23) node[midway, above, thick]{};
				
				\node at (2.25,-4.5) [circle, draw, fill=black, opacity=1, color=black, inner sep=0.6mm] (b24) {};
				\draw (b2) -- (b24) node[midway, above, thick]{};
				
				\node at (3.75,-3.35) [draw, fill=black, opacity=0, color=black, inner sep=0.8mm] (b3) {};
				\node at (3,-4.5) [circle, draw, fill=black, opacity=1, color=black, inner sep=0.6mm] (b31) {};
				\draw (b3) -- (b31) node[midway, above, thick]{};
				
				\node at (3.5,-4.5) [circle, draw, fill=black, opacity=1, color=black, inner sep=0.6mm] (b32) {};
				\draw (b3) -- (b32) node[midway, above, thick]{};
				
				\node at (4,-4.5) [circle, draw, fill=black, opacity=1, color=black, inner sep=0.6mm] (b33) {};
				\draw (b3) -- (b33) node[midway, above, thick]{};
				
				\node at (4.5,-4.5) [circle, draw, fill=black, opacity=1, color=black, inner sep=0.6mm] (b34) {};
				\draw (b3) -- (b34) node[midway, above, thick]{};
				
				\node (11) at (1.5,-6.25) {(b)};	\label{b}
			\end{tikzpicture}
			
			\label{fig:PQ-possible}
		\end{subfigure}
		\caption{(a) An interval representation of a graph $G$, and (b) its unique (up to equivalence transformations) PQ-tree.
			P-nodes are triangle-shaped and Q-nodes are rectangle-shaped. Note also the three levels of gray used to depict
			the intervals in (a) that determine to which level of nodes in (b) the intervals are assigned as inner vertices (cf.~\Cref{sec:intPQ}).}
		\label{fig:PQ-tree}
	\end{figure}

	A \emph{clique} in a graph is a set of its vertices which are pairwise adjacent. A clique is \emph{maximal} if it can not be extended by adding another vertex. 
	Interval graphs have linearly many maximal cliques which can easily be listed in linear time using the simplicial-vertex elimination procedure (a simplicial vertex is one whose neighbors induce a clique). 
	
	To recognize and test the isomorphism of interval graphs, Booth and Lueker \cite{recogIntervalLinear} invented \emph{PQ-trees} (see \Cref{fig:PQ-tree}), 
	ordered rooted trees which have the maximal cliques of an interval graph in its leaves, and every internal node is either one of the following: 
	\begin{itemize}
		\item A \emph{P-node}: the order of its children can be permuted arbitrarily.
		\item A \emph{Q-node} : the order of its children can be reversed (but not changed otherwise).
	\end{itemize}
	The above permissible reorderings at P- and Q-nodes are called \emph{equivalence transformations} of a PQ-tree, and the following is well-known:
	
	\begin{theorem}[Booth and Lueker \cite{recogIntervalLinear}]\label{thm:recogIntervalLinear}
		For every interval graph $G$, one can in linear time construct a PQ-tree~$T$, such that the following hold:
		\begin{itemize}
			\item This PQ-tree $T$ of $G$ is unique up to equivalence transformations.
			\item Every possible interval representation of $G$ corresponds to a PQ-tree $T'$ of $G$ (i.e., one equivalent to~$T$).
			The correspondence is that the linear order of the maximal cliques in the representation of $G$ is the same as the one given by the linear order of the leaves of~$T'$.
		\end{itemize}
	\end{theorem}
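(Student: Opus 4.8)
The plan is to reduce everything to a purely combinatorial statement about orderings via the classical Fulkerson--Gross characterization: a graph $G$ is an interval graph if and only if its maximal cliques admit a linear order in which, for every vertex $v\in V(G)$, the maximal cliques containing $v$ occur consecutively. First I would list all maximal cliques of $G$ in linear time by simplicial-vertex elimination (there are at most $|V(G)|$ of them), and for each vertex $v$ record the set $\ca C_v$ of maximal cliques through $v$. The theorem then becomes a statement about a compact, canonical representation of the family of all linear orders of the maximal cliques in which every $\ca C_v$ forms a contiguous block.

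Next I would introduce PQ-trees precisely as the data structure representing a family of permutations of a ground set that is cut out by such ``consecutiveness'' constraints, and build the desired tree incrementally. One starts from the trivial PQ-tree whose single P-node has all maximal cliques as children (it represents every permutation), and then processes the vertices $v$ one at a time, reducing the current PQ-tree so that it represents exactly those of its remaining orderings in which $\ca C_v$ is consecutive; this is carried out by the Booth--Lueker sequence of local template replacements along the ``pertinent subtree'' spanned by the leaves in $\ca C_v$. By induction on the number of processed constraints one shows that the reduced tree represents exactly the intersection of all imposed consecutiveness constraints, and, crucially, that amortizing the template matches against the sizes of the pertinent subtrees keeps the total running time linear. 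Since $G$ is assumed to be an interval graph, the Fulkerson--Gross condition guarantees that the set of valid orderings is nonempty, so the reduction never collapses to the null tree and the procedure outputs a genuine PQ-tree $T$.

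For uniqueness I would invoke the structural fact that two PQ-trees represent the same family of permutations of the ground set if and only if one is obtained from the other by equivalence transformations (arbitrary reorderings at P-nodes, reversals at Q-nodes); equivalently, a reduced PQ-tree is determined up to these transformations by the family it represents. Applied to the family of Fulkerson--Gross orderings of $G$, this yields that $T$ is unique up to equivalence transformations. For the correspondence with interval representations: from any interval representation of $G$, assign to each maximal clique a point lying in all of its intervals (with a tie-break so that distinct maximal cliques receive distinct points), and read off the left-to-right order of these clique points; since the intervals of each vertex $v$ cover a contiguous range of clique points, this order satisfies all consecutiveness constraints, hence coincides with a leaf order of some PQ-tree $T'$ equivalent to $T$. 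Conversely, from any $T'$ equivalent to $T$ and any of its leaf orders one reconstructs an interval representation by placing the maximal cliques at consecutive integer coordinates in that order and letting each vertex be the interval spanning the coordinates of the cliques containing it; validity of the order makes this an interval representation of $G$, which closes the loop.

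I expect the main obstacle to be the second paragraph: establishing simultaneously the correctness of the template-replacement reduction (that it produces the tree representing precisely the intersection of the consecutiveness constraints, and collapses exactly when no valid order exists) together with the amortized linear-time bound, which is the technically delicate heart of the Booth--Lueker argument. The uniqueness lemma and the passage back and forth between leaf orders and interval representations are comparatively routine once the PQ-tree machinery is in place.
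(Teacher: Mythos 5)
The paper does not prove this theorem; it is cited directly from Booth and Lueker~\cite{recogIntervalLinear} (and the relevant canonicity refinement from Colbourn and Booth~\cite{AutMPQtrees}). Your sketch correctly reconstructs the standard Booth--Lueker argument — Fulkerson--Gross consecutive-arrangement characterization, incremental PQ-tree reduction by template replacement with amortized linear time, and uniqueness of a PQ-tree up to equivalence transformations as the canonical representative of the family of valid leaf orders — so it matches the cited source's approach in all essentials.
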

	
	In particular, the latter point means that every automorphism of $G$ can be represented as an equivalence transformation of a PQ-tree of~$G$ (though, not the other way round without further information associated with the tree).
	One may go further this way. We say that an assignment of PQ-trees to interval graphs is {\em canonical} if, whenever we take isomorphic graphs $G\simeq G'$ and their canonical PQ-trees $T$ and $T'$, then $T$ and $T'$ are isomorphic respecting the order of the trees (one may say ``the same'').
	The following fact is crucial for us:
	\begin{corollary}[Colbourn and Booth \cite{AutMPQtrees}, noted already in~\cite{recogIntervalLinear}]\label{cor:canonint}
		For every interval graph $G$, one can in linear time compute the automorphism group of~$G$ and a PQ-tree of $G$ which is canonical.
	\end{corollary}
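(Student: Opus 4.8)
Here is how I would prove \Cref{cor:canonint}, building on \Cref{thm:recogIntervalLinear}.

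\smallskip
The plan is to start from the Booth--Lueker construction: apply \Cref{thm:recogIntervalLinear} to obtain, in linear time, a PQ-tree $T$ of $G$ whose leaves are the (linearly many) maximal cliques of $G$, and decorate $T$ with the clique--vertex incidences, i.e., for every vertex $v \in V(G)$ record the consecutive block of leaves whose cliques contain $v$ (its ``span''). Since by \Cref{thm:recogIntervalLinear} this $T$ is unique up to equivalence transformations, the whole task reduces to two things: (a) selecting one representative of this equivalence class in an isomorphism-invariant way (the canonical PQ-tree), and (b) reading off the automorphism group of $G$ from the ``symmetry slack'' left by the permissible equivalence transformations.

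\smallskip
For (a) I would canonize $T$ by a single bottom-up pass in the style of the Aho--Hopcroft--Ullman tree-canonization \cite{AHU}. Each leaf first receives a code recording only the \emph{pattern of births and deaths} at it (how many spans start there and how many end there) -- note this uses only counts, not vertex identities, so it is already isomorphism-invariant. At a P-node I sort the multiset of its children's codes; at a Q-node I have an ordered sequence of children's codes, and I replace it by the lexicographically smaller of that sequence and its reverse (so that the born/death decoration of spans crossing this Q-node is, after the choice, consistently oriented). Performing all these sorts by radix sort over bounded integer codes keeps the pass linear. The resulting canonical tree induces a canonical left-to-right order of the leaves, hence of the maximal cliques, and then each vertex $v$ receives the canonical name given by the first and last leaf of its span, with vertices sharing a span (twins) numbered consistently; this gives the canonical PQ-tree together with a canonical interval representation of $G$.

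\smallskip
For (b), the automorphism group of $G$ is precisely the group of those equivalence transformations of $T$ that additionally preserve the vertex-decoration, i.e., map spans to spans (equivalently, map a leaf to a leaf with the matching birth/death pattern in a compatible way), together with the freedom to permute twins. This group factors along the tree and is generated by local generators produced in the same bottom-up pass: at each P-node, generators (a transposition and a long cycle, carrying the attached subtrees along) for each class of children with equal canonical code; at each Q-node whose decorated children-sequence is a palindrome, the single reversal generator; at each twin-class of vertices, the transpositions generating its symmetric group; and nothing else. These are $O(|V(G)|+|E(G)|)$ generators, output in linear time. The main obstacle, and the point where I expect to lean on the details of Booth--Lueker and Colbourn--Booth \cite{recogIntervalLinear,AutMPQtrees}, is exactly the interaction between the vertex-decoration and Q-node reversals in (b): one has to verify that an equivalence transformation extends to an automorphism of $G$ iff it preserves all spans, that this condition is local enough to be checked palindrome-by-palindrome at Q-nodes and class-by-class at P-nodes so that $\mathrm{Aut}(G)$ genuinely decomposes as the product of the local groups listed above, and that the required bookkeeping still fits into linear time; the combinatorial core of this is already in \cite{recogIntervalLinear,AutMPQtrees}, and the corollary is essentially a restatement of their results in the form we will use.
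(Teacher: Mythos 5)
The paper does not actually prove \Cref{cor:canonint} --- it is imported wholesale from Colbourn and Booth \cite{AutMPQtrees} (with the observation credited already to Booth and Lueker \cite{recogIntervalLinear}), and the statement is used in the sequel as a black box. So there is no ``paper's own proof'' to compare against; I will instead assess your sketch against what the cited works actually do.

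Your reconstruction is faithful to the Colbourn--Booth strategy. The core idea in \cite{AutMPQtrees} is exactly a \emph{labeled} PQ-tree (in the present paper's language, the PQ-tree together with the assignment of inner vertices and their ranks to nodes, as set up in \Cref{sec:intPQ}) processed bottom-up by an AHU-style canonization: sort children by code at P-nodes, take the lexicographic minimum of a sequence and its reverse at Q-nodes, and emit local generators --- symmetric groups on equal-coded sibling classes at P-nodes, a single reversal when the decorated Q-sequence is a palindrome, and symmetric groups on twin classes of vertices. Your ``span'' decoration is the same data as the paper's inner-vertex assignment, just recorded per vertex rather than per node. You also correctly isolate the two things that genuinely need verification: that an equivalence transformation of the PQ-tree lifts to an automorphism of $G$ if and only if it preserves the vertex decoration, and that this condition localizes cleanly enough for the generator list to be complete; both are the actual content of \cite{AutMPQtrees}, and flagging them as ``where I lean on the references'' is appropriate.

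Two small cautions. First, ``$\mathrm{Aut}(G)$ decomposes as the product of the local groups'' is imprecise: the group is an iterated wreath/semidirect product governed by the tree, not a direct product; what is true (and what you actually use) is that the listed local permutations form a \emph{generating set}, which is all the corollary asks for. Second, the initial leaf codes need to record enough of the clique/vertex incidence (e.g.\ clique sizes and the full multiset of inner-vertex ranks as in the paper's \Cref{sec:intPQ}, not merely birth/death counts) so that the refinement in the bottom-up pass is sound; this is a bookkeeping issue rather than a conceptual one, and Colbourn--Booth's labels handle it. Neither point undermines the sketch.
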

	
	The definition of a PQ-tree (of an interval graph) explicitly refers only to the maximal cliques of $G$ and not directly to its vertices,
	but it will be useful to clearly understand the relation of PQ-tree nodes to the particular vertices of $G$.
	Every node $p$ of a PQ-tree $T$ of $G$ can be associated with a subgraph of $G$ formed by the union of all cliques of the descendant leaves of~$p$ -- this subgraph is said to {\em belong} to~$p$.
	Then, for a node~$p$, we define the {\em inner vertices assigned to~$p$} as those vertices of $G$ which belong to $p$ and, if $p$ is not a leaf, they belong to at least two child nodes of $p$,
	but they do not belong to any node which is not an ancestor or a descendant of $p$. (See the illustration in \Cref{fig:PQ-tree}, and further in \Cref{fig:lemma13}.)

	Note that every vertex of $G$ is an inner vertex of precisely one node of its PQ-tree~$T$.
	Moreover, by the `consecutive-ones' property of a PQ-tree, the following holds;
	if a vertex $v$ of $G$ is an inner vertex of a node $p$ of $T$, and $v$ also belongs to a son $p_1$ of~$p$, then $v$ belongs to every descendant of $p_1$ in~$T$.
	This illustrates the unique nature of the node $p$ (to which $v$ is an inner vertex) for the vertex $v$ within the PQ-tree~$T$.
	
	In the case of a P-node, the inner vertices assigned to $p$ belong to all child nodes of $p$, but this is generally not true for Q-nodes.
	We thus additionally define the {\em ranking of inner vertices}; this is trivial for P-nodes (all inner vertices of the same rank).
	For a Q-node $q$ of $T$, we index the sons of $q$ from left o right in a palindromic way (i.e., as $1,2,3,2,1$ or $1,2,3,3,2,1$ depending on parity), which is invariant upon reversal.
	The {\em rank} of every inner vertex $w$ assigned to $q$ is then the multiset of indices of the sons of $q$ that $w$ belongs to (obviously, this must be a consecutive section of the index sequence).
	Observe that since two inner vertices of the same node and of the same rank are in the same collection of maximal cliques of $G$, they are mutually symmetric in the automorphism group of~$G$.

	\section{Automorphisms of Set Families of Bounded Antichain Size}\label{sec:autsets}
	
	In this section, we give the main technical tool of this paper -- a procedure efficiently computing the automorphism group of a set family
	under the assumption of bounded antichain size, which builds on ideas used already in our past paper~\cite{aaolu2019isomorphism}.
	Here we formulate those ideas in an extended form as the standalone result in \Cref{thm:automsetfam}.
	Again, to stay on the more general side, we consider set families in the multiset setting, meaning that the same set may occur in a family multiple times
	(but this does not pose any additional difficulties in the coming arguments besides having to observe the multiplicity as a label on a set).
	
	Note that the problem of computing the automorphism group of a colored set family is actually a special case of the problem
	{\sc AutomMarkedINT}$(G;\,\ca A^1,\ldots,\ca A^m)$ where $G$ is a clique, but, at the same time,
	we are going to prove in the next \Cref{sec:setstoint} that the {\sc AutomMarkedINT} reduces to the problem solved here.
	
	\begin{definition}[Automorphism group of a colored set family]\label{def:automsetfam}~\rm\\
		The problem {\sc AutomSET}$(X;\,\ca U^1,\ldots,\ca U^m)$ is defined as follows:
		\begin{description}
			\item[Input] For a finite ground set $X$, a finite set family $\ca U\subseteq2^X$ partitioned into $m\geq1$ color classes $\ca U=\ca U^1\cup\ldots\cup\ca U^m$
			(allowing the same set to occur in a family multiple times).
			\item[Task] Compute the group $\Gamma$ of the color-preserving permutations of $\ca U$ which come from a permutation of the ground set~$X$.
			Precisely, a permutation $\tau$ of $\ca U$ belongs to $\Gamma$, if and only if there exists a permutation $\sigma$ of~$X$
			such that, for every $i\in\{1,\ldots,m\}$ and all $B\in\ca U^i$, we have~$\tau(B)\in\ca U^i$ and~$\tau(B)=\sigma(B)$
			(this trivially gives $|B|=|\tau(B)|$).
			Note that the latter condition also immediately implies that the (possible) multiplicities of~$B$~and~of~$\sigma(B)$ in each family $\ca U^i$ are equal.
		\end{description}
		\smallskip
		The problem {\sc AutomSimpleSET}$(X;\,\ca U^1,\ldots,\ca U^m)$ is the same as {\sc AutomSET}$(X;\,\ca U^1,\ldots,\ca U^m)$ with the following condition on the input:
		For every set $B\in\ca U$, there is exactly one index $i\in\{1,\ldots,m\}$ such that $B\in\ca U^i$, and the multiplicity of $B$ in $\ca U^i$ equals one.
	\end{definition}
	
	We again, as with \Cref{thm:autommarked} above, estimate the input size here by the same simple argument.
	If the maximum antichain size in the family $\ca U=\ca U^1\cup\ldots\cup\ca U^m\subseteq2^X$ equals~$a$, then the input size of an instance 
	of {\sc AutomSET}$(X;\,\ca U^1,\ldots,\ca U^m)$ is at most $\sum_{U\in\ca U}|U| \in \ca O(a\cdot|X|^2)$.
	(Possible multiplicities of sets in the family $\ca U$ are negligible in this regard since they are encoded as integer labels of the multiple sets.)
	
	We start with a simple observation that will simplify the next \Cref{thm:automsetfam}:
	\begin{proposition2rep}\apxmark\label{pro:automsetfam-s}
		The problem {\sc AutomSET}$(X;\,\ca V^1,\ldots,\ca V^n)$ reduces in linear time to the problem {\sc AutomSimpleSET}$(X;\,\ca U^1,\ldots,\ca U^m)$
		for suitable~$\ca U^1,\ldots,\ca U^m$.
	\end{proposition2rep}
	\begin{proof}
		For every set $B\in\ca V=\ca V^1\cup\ldots\cup\ca V^n$, we record the integer vector $m_B:=(b_1,\ldots,b_n)$ where $b_i$ is the multiplicity of $B$ in the family $\ca V^i$.
		As noted already in Definition~\ref{def:automsetfam}, any permutation $\tau$ in the solution of {\sc AutomSET}$(X;\,\ca V^1,\ldots,\ca V^n)$ must preserve this multiplicity vector; $m_B=m_{\tau(B)}$.
		Let $\ca U$ be the simplification of the (multi)family $\ca V$, i.e., without repetition of multiple member sets.
		We hence define a partition $(\ca U^1,\ldots,\ca U^m)$ as the partition of $\ca U$ given by the equality of the vectors~$m_B$.
		Then $\tau$ projects to a permutation in {\sc AutomSimpleSET}$(X;\,\ca U^1,\ldots,\ca U^m)$.
		
		Conversely, for any permutation $\sigma$ in the solution of {\sc AutomSimpleSET}$(X;\,\ca U^1,\ldots,\ca U^m)$, we expand $\sigma$ to permutations $\tau$ of $\ca V$ as follows.
		Every functional assignment $B\mapsto\sigma(B)$ is lifted, for each $i\in\{1,\ldots,m\}$, to all bijections of the set (possibly empty) of multiple copies of $B$ in $\ca V^i$ to the set of multiple copies of $\sigma(B)$ in $\ca V^i$.
		Then every such $\tau$ belongs to the solution group of {\sc AutomSET}$(X;\,\ca U^1,\ldots,\ca U^m)$.
	\end{proof}

	On the other hand, we remark that in the problem {\sc AutomSimpleSET}$(X;\,\ca U^1,\ldots,\ca U^m)$, the number $m$ of colors is not bounded,
	and so there is (likely) no easy way to reduce this problem to the uncolored case.
	
	\begin{theorem}\label{thm:automsetfam}
		The problem {\sc AutomSET}$(X;\,\ca U^1,\ldots,\ca U^m)$ is solvable in FPT-time with respect to the parameter $a$ which is the maximum antichain size of $\ca U$.
	\end{theorem}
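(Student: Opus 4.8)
The plan is to reduce to the ``simple'' version, record the Venn--diagram characterisation of realisability, and then run Babai's tower-of-groups procedure~\cite{babai-bdcm} over the distinct cardinalities occurring in~$\ca U$, handling at most $a$ sets per level and branching over the at most $a!$ local symmetries there; this adapts to the colored/multiset setting the argument implicit in~\cite{aaolu2019isomorphism}. First I would apply \Cref{pro:automsetfam-s} to pass from {\sc AutomSET} to {\sc AutomSimpleSET}$(X;\,\ca U^1,\ldots,\ca U^m)$, so that from now on every $B\in\ca U$ has a unique color and multiplicity one (the maximum antichain size $a$ is unchanged). Next I would recall the characterisation already used in the introduction: a color-preserving permutation $\tau$ of $\ca U$ is induced by some permutation $\sigma$ of~$X$ if and only if $\tau$ preserves colors and the cardinality of every atom of the Boolean algebra generated by the sets of~$\ca U$. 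Since $\ca U$ has antichain size~$a$, by Dilworth's theorem it partitions into at most $a$ chains $C_1,\ldots,C_a$ (computable in polynomial time via bipartite matching); in particular $\ca U$ has at most $a$ sets of any fixed cardinality, and every nonempty atom is, up to complementation, an intersection of at most $a$ sets, one per chain.

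I would then set up the tower. Let $s_1<s_2<\cdots<s_r$ be the distinct cardinalities occurring in~$\ca U$; write $\ca U_j$ for the at most $a$ sets of cardinality exactly $s_j$ and $\ca U_{\le j}:=\ca U_1\cup\cdots\cup\ca U_j$. Let $\Gamma_j$ be the group of color- and cardinality-preserving permutations $\tau$ of~$\ca U$ whose restriction to $\ca U_{\le j}$ is induced by some permutation of~$X$. Then $\Gamma_0$ is the direct product, over the color-and-cardinality classes of~$\ca U$, of the symmetric groups of these classes, so a generating set of $\Gamma_0$ is immediate; we have $\Gamma_0\supseteq\Gamma_1\supseteq\cdots\supseteq\Gamma_r$, and $\Gamma_r$ is exactly the group demanded by \Cref{def:automsetfam}. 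The algorithm computes a generating set of $\Gamma_j$ from one of $\Gamma_{j-1}$ in $r\le a|X|$ steps.

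The transition from $\Gamma_{j-1}$ to $\Gamma_j$ is the core of the algorithm and the place where the antichain bound is exploited. The action of any candidate permutation on $\ca U_j$ ranges over a set of size at most $a!$, since $|\ca U_j|\le a$; I would enumerate these at most $a!$ actions~$\pi$ and, for each~$\pi$, carve out inside $\Gamma_{j-1}$ the subgroup/coset of permutations that, together with~$\pi$ on~$\ca U_j$, preserve the cardinality of every atom of $\ca U_{\le j}$ --- equivalently, of the (at most $2^a$-fold) refinement of the atoms of $\ca U_{\le j-1}$ by the Boolean cells of~$\ca U_j$. The admissible~$\pi$ form a subgroup of $\mathrm{Sym}(\ca U_j)$, so after lifting through $\Gamma_{j-1}$ and combining coset representatives one obtains generators of~$\Gamma_j$; in the style of a single branch of Babai's tower this costs $a!\cdot\mathrm{poly}(|X|,a)$ per level. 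Over all $r\le a|X|$ levels this gives $f(a)\cdot\mathrm{poly}(|X|)$ total running time, which proves \Cref{thm:automsetfam}.

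The step I expect to be the main obstacle is precisely this transition test, i.e.\ certifying that realisability on $\ca U_{\le j}$ is captured by an $a!$-bounded, FPT-computable refinement of~$\Gamma_{j-1}$. The difficulty is that realisability on $\ca U_{\le j}$ is \emph{not} the conjunction of realisability on $\ca U_{\le j-1}$ with a purely local condition on~$\ca U_j$: the atoms of $\ca U_{\le j}$ genuinely mix the two levels, there can be unboundedly many atoms of $\ca U_{\le j-1}$, and naively one only gets an XP (not FPT) enumeration of them. The intended resolution is to carry alongside $\Gamma_{j-1}$ only a succinct record of how the sets of~$\ca U_j$ slice the earlier atoms --- crucially, each set of $\ca U_j$ lies in a single chain $C_c$, which sharply limits how it can cut an atom --- and to argue that this record refines the orbit structure of $\Gamma_{j-1}$ by a factor controlled only by~$a!$ at each level, so that the tower, together with the multiplicity-vector bookkeeping inherited from \Cref{pro:automsetfam-s}, stays FPT. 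Making this bound precise in the colored setting, faithfully extending~\cite{aaolu2019isomorphism}, is the heart of the proof.
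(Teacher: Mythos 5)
Your proposal shares the paper's starting points --- reducing to {\sc AutomSimpleSET} via \Cref{pro:automsetfam-s} and characterising realisability through the cardinalities of Venn cells (\Cref{lem:preciseVenn}) --- but it then diverges, and the divergence is exactly where the gap lies. You build the tower level-by-level over cardinalities and define $\Gamma_j$ as the permutations of $\ca U$ whose restriction to $\ca U_{\le j}$ is \emph{realizable}. The trouble, which you correctly flag, is that this is a global condition: two permutations in $\Gamma_{j-1}$ that agree on $\ca U_j$ need not lie in the same coset of $\Gamma_j$. For instance, with $\ca U_{\le j-1}=\{\{1\},\{2\}\}$ and $\ca U_j=\{\{1,3\},\{2,4\}\}$, the permutation swapping $\{1\}\leftrightarrow\{2\}$ while fixing both two-element sets lies in $\Gamma_{j-1}$ and restricts to the identity on $\ca U_j$, yet it is not realizable on $\ca U_{\le j}$. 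So the index $|\Gamma_{j-1}|/|\Gamma_j|$ is not controlled by enumerating the at most $a!$ actions on $\ca U_j$, and your plan does not supply the bound that \Cref{thm:furstgen} requires; you openly leave it as ``the heart of the proof.''

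The paper avoids this by not marching through cardinalities at all. Each refinement step $\Gamma_{i-1}\supsetneq\Gamma_i$ is defined by a \emph{local} condition --- that one specific small subfamily $\ca T$ (a union of at most $a$ cardinality-and-color classes $\ca U^j_c$, hence at most $a^2$ sets) be Venn-good --- so that two elements of $\Gamma_{i-1}$ agreeing on $\ca T$ differ by something identity on $\ca T$, which is trivially Venn-good; the index is therefore at most $(a!)^a$. The ingredient missing from your plan is \Cref{lem:Venngood}: if the whole family is not Venn-good with some generator, there is a witness subfamily that is an antichain, hence of size at most $a$ and touching at most $a$ classes of $\ca W$. \Cref{alg:Gammaprime} then locates a lexicographically minimal such $\ca T$ by a greedy scan (a brute-force search over all $\le a$-subsets of classes would only be XP), and the number of refinement steps is bounded by $\log_2|\Gamma_0|=\ca O(n\,a\log a)$ rather than by the number of cardinality levels. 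The Dilworth decomposition into $\le a$ chains that you invoke is neither used nor needed --- the antichain bound already gives at most $a$ sets per (color, cardinality) class, which is all the argument requires.
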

	
	The rest of the section is devoted to the proof of \Cref{thm:automsetfam} which, in view of Proposition~\ref{pro:automsetfam-s}, is sufficient to prove for the problem {\sc AutomSimpleSET}$(X;\,\ca U^1,\ldots,\ca U^m)$.
	
	\subparagraph{Cardinality Venn diagrams.}
	First of all, we review what Definition~\ref{def:automsetfam}, specifically the words ``there exists a permutation $\sigma$ of~$X$'', mean for us.
	Since it is not much efficient to deal with the many permutations of~$X$, we now show a simple fact that it will be enough to observe certain cardinalities to decide the existence of such permutation $\sigma$ as required in the definition.
	For a set family $\mathcal{U}$, we call a {\em cardinality Venn diagram} of $\mathcal{U}$ the integer vector
	$\big(\ell_{\mathcal{U},\mathcal{U}_1}: \emptyset\not=\mathcal{U}_1\subseteq\mathcal{U}\big)$
	such that $\ell_{\mathcal{U},\mathcal{U}_1}:=|L_{\mathcal{U},\mathcal{U}_1}|$ where
	$L_{\mathcal{U},\mathcal{U}_1}=\bigcap_{A\in\mathcal{U}_1}\!A \setminus \bigcup_{B\in{\mathcal{U}\setminus\mathcal{U}_1}}\!B$.
	\vspace{1pt}%
	That is, informally, we record the cardinality of every internal cell of the Venn diagram of~$\mathcal{U}$.
	See an illustration in Figure~\ref{fig:vennIL}.
	
	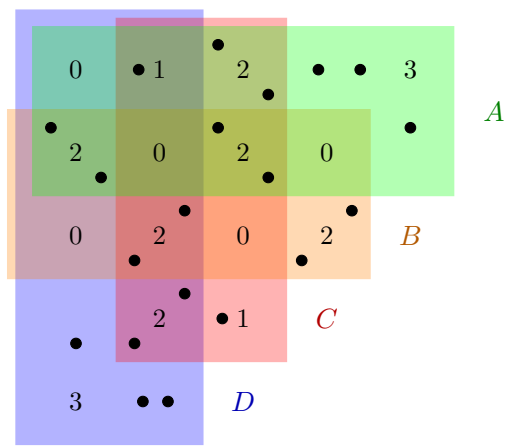
\begin{figure}[tb]
		\centering
		\begin{tikzpicture}[scale=1.1]
			
			\filldraw [fill=blue, draw=blue, ultra thick, opacity=0.3] (-0.2,5.2) rectangle (2,0); 
			
			\filldraw [fill=red, draw=red, ultra thick, opacity=0.3] (1,5.1) rectangle (3,1); 
			
			\filldraw [fill=orange, draw=orange, ultra thick, opacity=0.3] (-0.3,4) rectangle (4,2); 
			
			\filldraw [fill=green, draw=green, ultra thick, opacity=0.3] (0,5) rectangle (5,3); 
			
			\node (1) at (0.5,0.5) {$3$};
			\node (2) at (2.5,1.5) {$1$};
			\node (3) at (3.5,2.5) {$2$};
			\node (4) at (4.5,4.5) {$3$};
			
			\node (12) at (1.5,1.5) {$2$};
			\node (13) at (0.5,2.5) {$0$};
			\node (14) at (0.5,4.5) {$0$};
			\node (23) at (2.5,2.5) {$0$};
			\node (24) at (2.5,4.5) {$2$};
			\node (34) at (3.5,3.5) {$0$};
			
			\node (123) at (1.5,2.5) {$2$};
			\node (124) at (1.5,4.5) {$1$};
			\node (134) at (0.5,3.5) {$2$};
			\node (234) at (2.5,3.5) {$2$};
			
			\node (1234) at (1.5,3.5) {$0$};
			
			\node (A) at (5.5,4) {\color{green!50!black}\large$A$};
			\node (B) at (4.5,2.5) {\color{orange!70!black}\large$B$};
			\node (C) at (3.5,1.5) {\color{red!70!black}\large$C$};
			\node (D) at (2.5,0.5) {\color{blue!70!black}\large$D$};
			
			\tikzstyle{every node}=[draw, shape=circle, inner sep=1.4pt, fill=black]
			\node at (0.5,1.2) {};	\node at (1.3,0.5) {};	\node at (1.6,0.5) {};
			\node at (3.4,4.5) {};	\node at (3.9,4.5) {};	\node at (4.5,3.8) {};
			\node at (1.2,1.2) {};	\node at (1.8,1.8) {};
			\node at (2.25,1.5) {};
			\node at (1.2,2.2) {};	\node at (1.8,2.8) {};
			\node at (0.8,3.2) {};	\node at (0.2,3.8) {};
			\node at (1.25,4.5) {};
			\node at (2.8,3.2) {};	\node at (2.2,3.8) {};
			\node at (2.8,4.2) {};	\node at (2.2,4.8) {};
			\node at (3.2,2.2) {};	\node at (3.8,2.8) {};
		\end{tikzpicture}
		\caption{An illustration of the concept of a cardinality Venn diagram, and of Lemma~\ref{lem:preciseVenn}.
			We have $4$ sets $A,B,C,D$, and the ground set elements are depicted with the dots. The numbers in the cells of the diagram are the cardinalities of these cells.
			One can check, using the depicted cardinalities, that there exists a permutation of the ground set which permutes our sets $(A,B,C,D)$ into $(D,B,C,A)$,
			but there is no such permutation permuting $(A,B,C,D)$ into $(A,C,B,D)$.}
		\label{fig:vennIL}
	\end{figure}

	For $\ca U_1\subseteq\ca U$, let naturally $\varrho(\mathcal{U}_1)=\{\varrho(B):B\in\mathcal{U}_1\}$.
	We have easily got:
	
	\begin{lemma2rep}[\cite{aaolu2019isomorphism}]\apxmark\label{lem:preciseVenn}
		Let $\varrho$ be a permutation of~$\ca U$ over~$X$.
		There exists a permutation $\sigma$ of~$X$ such that, for every $B\in\ca U$, we have~$\varrho(B)=\sigma(B)$, if and only if
		the cardinality Venn diagrams of $\mathcal{U}$ and of $\varrho(\mathcal{U})$ are the same (equal), meaning that
		$\ell_{\mathcal{U},\mathcal{U}_1}= \ell_{\mathcal{U},\varrho(\mathcal{U}_1)}$~for~all~$\emptyset\not=\mathcal{U}_1\subseteq\mathcal{U}$.
		
		Furthermore, for $\mathcal{U'}\subseteq\mathcal{U}$ such that $\varrho(\mathcal{U'})=\mathcal{U'}$,
		one can in $\ca O\big(|X|+\sum_{U\in\ca U'}|U|\big)$ time test the condition as above, i.e., whether the equalities
		$\ell_{\mathcal{U}'\!,\,\mathcal{U}_1}= \ell_{\mathcal{U}'\!,\,\varrho(\mathcal{U}_1)}$ hold for all~$\emptyset\not=\mathcal{U}_1\subseteq\mathcal{U}'$.
	\end{lemma2rep}
	
	\begin{proof}
		$\Rightarrow$ 	Suppose that there exists such a permutation $\sigma$ of~$X$.
		Then, for all~$\emptyset\not=\mathcal{U}_1\subseteq\mathcal{U}$, every element of $L_{\mathcal{U},\mathcal{U}_1}$ is mapped by $\sigma$ into $L_{\mathcal{U},\varrho(\mathcal{U}_1)}$,
		and so the claim follows since $\sigma$ is a permutation.
		
		$\Leftarrow$	For any~$\emptyset\not=\mathcal{U}_1\subseteq\mathcal{U}$, we have that $|L_{\mathcal{U},\mathcal{U}_1}|=|L_{\mathcal{U},\varrho(\mathcal{U}_1)}|$
		which implies an existence of a bijection from $L_{\mathcal{U},\mathcal{U}_1}$ to $L_{\mathcal{U},\varrho(\mathcal{U}_1)}$.
		Since $L_{\mathcal{U},\mathcal{U}_1}\cap L_{\mathcal{U},\mathcal{U}_2}=\emptyset$ for $\ca U_1\not=\ca U_2$, the composition of these bijections is sound and results in a permutation $\sigma$ of~$X$.
		Picking any $B\in\ca U$, we have $B=\bigcup\big\{L_{\mathcal{U},\mathcal{U}_1}: \{B\}\subseteq\mathcal{U}_1\subseteq\mathcal{U}\big\}$,
		and hence $\sigma(B)=\bigcup\big\{\sigma(L_{\mathcal{U},\mathcal{U}_1}): \{B\}\subseteq\mathcal{U}_1\subseteq\mathcal{U}\big\}= \bigcup\big\{L_{\mathcal{U},\varrho(\mathcal{U}_1)}: \{B\}\subseteq\mathcal{U}_1\subseteq\mathcal{U}\big\}= \varrho(B)$.
		
		\smallskip
		As for testing the condition `$\ell_{\mathcal{U}'\!,\,\mathcal{U}_1}= \ell_{\mathcal{U}'\!,\,\varrho(\mathcal{U}_1)}$ hold for all~$\emptyset\not=\mathcal{U}_1\subseteq\mathcal{U}'$\,',
		we loop through all elements $x\in X$, and for each $x$ we record in $\mathcal{O}(n)$ time to which of the sets in $\mathcal{U'}$ this $x$ belongs to.
		Summing the obtained records at the~end precisely gives the $\mathcal{O}(|X|)$ nonzero values
		$\ell_{\mathcal{U}'\!,\,\mathcal{U}_1}$ over $\emptyset\not=\mathcal{U}_1\subseteq\mathcal{U}'$.
		We analogously compute the values $\ell_{\mathcal{U}'\!,\,\varrho(\mathcal{U}_1)}$ over $\emptyset\not=\mathcal{U}_1\subseteq\mathcal{U}'$, and then compare.
	\end{proof}
	
	Our strategy for the proof of \Cref{thm:automsetfam} is as follows.
	\begin{itemize}
		\item Let, for $n=|X|$ and every $j\in\{1,\ldots,m\}$,~ $\ca U^j=\bigcup_{c=1}^n\ca U^j_c$ be a partition of $\ca U^j$ into subfamilies of sets of cardinality $c\in\{1,\ldots,n\}$.
		We make the initial group $\Gamma'$ (of permu\-tations of~$\ca U=\bigcup_{j=1}^m\ca U^j$) as the direct product of the symmetric groups (those of all permutations) on all nonempty families $\ca U^j_c$ over $j\in\{1,\ldots,m\}$ and $c\in\{1,\ldots,n\}$.
		\item We compute the subgroup $\Gamma\subseteq\Gamma'$ of those permutations which fulfill the condition `$\ell_{\mathcal{U},\mathcal{U}_1}= \ell_{\mathcal{U},\varrho(\mathcal{U}_1)}$ for all~$\emptyset\not=\mathcal{U}_1\subseteq\mathcal{U}$\,'
		of Lemma~\ref{lem:preciseVenn}. Then $\Gamma$ will be, by Lemma~\ref{lem:preciseVenn}, the solution to the {\sc AutomSimpleSET}$(X;\,\ca U^1,\ldots,\ca U^m)$ problem.
		At this point it will be quite important that each of the subfamilies $\ca U^j_c$ (as an antichain) has bounded cardinality.
	\end{itemize}
	\smallskip
	The latter point, however, is not an easy task, and we will employ classical Babai's tower-of-groups procedure to ``gradually refine'' $\Gamma'$ into $\Gamma$,
	ensuring that the conditions of Lemma~\ref{lem:preciseVenn} hold, in a sense, for more and more combinations of the subfamilies~$\ca U^j_c$ until all are satisfied.
	
	Before getting into the group-computing tools, we introduce one more technical result which will be crucial in the gradual refinement of $\Gamma'$ into $\Gamma$.
	In the setting of Lemma~\ref{lem:preciseVenn}, we say that $\mathcal{U}'\subseteq\mathcal{U}$ is {\em Venn-good} with $\varrho$ if
	$\ell_{\mathcal{U}'\!,\,\mathcal{U}_1} = \ell_{\varrho(\mathcal{U}'),\varrho(\mathcal{U}_1)}$ holds true for all $\emptyset\not=\mathcal{U}_1\subseteq\mathcal{U}'$,
	and we call such $\mathcal{U}_1$ a {\em witness} (of $\mathcal{U}'$ not being Venn-good) if $\ell_{\mathcal{U}'\!,\,\mathcal{U}_1} \not= \ell_{\varrho(\mathcal{U}'),\varrho(\mathcal{U}_1)}$.
	
	\begin{lemma2rep}[\cite{aaolu2019isomorphism}]\apxmark\label{lem:Venngood}
		Let $\varrho$ be a permutation of a set family~$\ca U$, and $\mathcal{U'}\subseteq\mathcal{U}$ be such that $\varrho(\mathcal{U'})=\mathcal{U'}$.
		If $\mathcal{U}'$ is not Venn-good with~$\varrho$, then there exist $\mathcal{U}_2,\mathcal{U}_3\subseteq\mathcal{U}'$ such that $|\mathcal{U}_2|\leq2$ or $\mathcal{U}_2$ is an antichain in the inclusion,
		$\emptyset\not=\mathcal{U}_3\subseteq\mathcal{U}_2$ and	$\ell_{\mathcal{U}_2,\mathcal{U}_3}\not= \ell_{\varrho(\mathcal{U}_2),\varrho(\mathcal{U}_3)}$
		(not~Venn-good).
	\end{lemma2rep}
	
	\begin{proof}
		Choose $\mathcal{U}_2\subseteq\mathcal{U}'$ such that $\mathcal{U}_2$ is not Venn-good with $\varrho$ and it is minimal such by inclusion,
		and assume (for a contradiction) that there are $A_1,A_2\in\mathcal{U}_2$ such that~$A_1\subseteq A_2$.
		If $\varrho(A_1)\not\subseteq\varrho(A_2)$, then already $\mathcal{U}_2:=\{A_1,A_2\}$ is not Venn-good (with a witness $\{A_1\}$), and so let $\varrho(A_1)\subseteq\varrho(A_2)$.
		Let $\mathcal{U}_3$ be a witness of $\mathcal{U}_2$ not being Venn-good, and for $j=2,3$ denote: 
		$\mathcal{V}_j^0:=\mathcal{U}_j\setminus\{A_1,A_2\}$,
		$\mathcal{V}_j^1:=\big(\mathcal{U}_j\cup\{A_1\}\big)\setminus\{A_2\}$,
		$\mathcal{V}_j^2:=\big(\mathcal{U}_j\cup\{A_2\}\big)\setminus\{A_1\}$,
		$\mathcal{V}_j^3:=\mathcal{U}_j\cup\{A_1,A_2\}$.
		
		Since $A_1\subseteq A_2$ and $\varrho(A_1)\subseteq\varrho(A_2)$ (and so $L_{\mathcal{U}_2,\mathcal{V}_3^1}=\emptyset$), we easily derive
		\begin{eqnarray*}
			\ell_{\mathcal{U}_2,\mathcal{V}_3^1} = &0& = \ell_{\varrho(\mathcal{U}_2),\varrho(\mathcal{V}_3^1)}
			\,,\end{eqnarray*}
		and since, by our minimality assumption, all three subfamilies $\mathcal{V}_2^0$, $\mathcal{V}_2^1$ and $\mathcal{V}_2^2$ are Venn-good,
		\begin{eqnarray*}
			\ell_{\mathcal{U}_2,\mathcal{V}_3^0} = 
			\ell_{\mathcal{U}_2\setminus\{\!A_1\!\},\,\mathcal{V}_3^0} =
			\ell_{\mathcal{V}_2^2,\,\mathcal{V}_3^0} 
			&\!=\!&
			\ell_{\varrho(\mathcal{V}_2^2),\varrho(\mathcal{V}_3^0)} =
			\ell_{\varrho(\mathcal{U}_2)\setminus\{\varrho(\!A_1\!)\},\varrho(\mathcal{V}_3^0)} =
			\ell_{\varrho(\mathcal{U}_2),\varrho(\mathcal{V}_3^0)}
			\,,\\
			\ell_{\mathcal{U}_2,\mathcal{V}_3^3} = 
			\ell_{\mathcal{U}_2\setminus\{\!A_2\!\},\,\mathcal{V}_3^3\setminus\{\!A_2\!\}} 
			=\ell_{\mathcal{V}_2^1,\,\mathcal{V}_3^1} 
			&\!=\!&
			\ell_{\varrho(\mathcal{V}_2^1),\varrho(\mathcal{V}_3^1)} =
			\ell_{\varrho(\mathcal{U}_2)\setminus\{\!\varrho(\!A_2\!)\!\},\varrho(\mathcal{V}_3^3)\setminus\{\!\varrho(\!A_2\!)\!\}} =
			\ell_{\varrho(\mathcal{U}_2),\varrho(\mathcal{V}_3^3)}
			\,.\end{eqnarray*}
		Then, using the previous equalities, and the trivial observation 
		$\ell_{\mathcal{V}_2^0,\,\mathcal{V}_3^0}= \ell_{\mathcal{U}_2,\mathcal{V}_3^0}+\ell_{\mathcal{U}_2,\mathcal{V}_3^1}+\ell_{\mathcal{U}_2,\mathcal{V}_3^2}+\ell_{\mathcal{U}_2,\mathcal{V}_3^3}$
		with the analogous equality under $\varrho$, we conclude
		\begin{eqnarray*}
			\ell_{\mathcal{U}_2,\mathcal{V}_3^2} &=&
			\ell_{\mathcal{V}_2^0,\,\mathcal{V}_3^0}
			-\ell_{\mathcal{U}_2,\mathcal{V}_3^0}-0
			-\ell_{\mathcal{U}_2,\mathcal{V}_3^3}
			\\	&=&\ell_{\varrho(\mathcal{V}_2^0),\varrho(\mathcal{V}_3^0)}
			-\ell_{\varrho(\mathcal{U}_2),\varrho(\mathcal{V}_3^0)} -0
			-\ell_{\varrho(\mathcal{U}_2),\varrho(\mathcal{V}_3^3)}
			=\>\ell_{\varrho(\mathcal{U}_2),\varrho(\mathcal{V}_3^2)}
			\,.\end{eqnarray*}
		However, $\mathcal{U}_3\in\{\mathcal{V}_3^0, \mathcal{V}_3^1,\mathcal{V}_3^2,\mathcal{V}_3^3\}$,
		and so one of the latter four derived equalities contradicts the assumption that $\mathcal{U}_3$ witnessed $\mathcal{U}_2$ not being Venn-good with~$\varrho$.
	\end{proof}
	
	Since we deal with set families of bounded-size antichains, Lemma~\ref{lem:Venngood} essentially tells us that either a family is Venn-good, or it contains a ``small'' subfamily not being Venn-good.

	\subparagraph{Babai's tower-of-groups~\cite{babai-bdcm}.}
	The famous classical paper of Babai~\cite{babai-bdcm} is not just a standalone algorithm, but more an outline of how to efficiently compute
	a subroup $\Gamma$ of a given group $\Gamma'$ in a setting in which the conditions defining the subgroup $\Gamma$ can be stepwise refined in sufficiently small steps.
	Here, by ``computing a~group'' we mean to output a set of its generators which is at most polynomially large by~\cite{furst}. 
	Note that this task, in general, cannot be done directly by processing all members (even though we have an efficient membership test at hand) of the group $\Gamma'$ which can be exponentially large.
	
	Babai's ``tower-of-groups'' approach informally works as follows: we iteratively compute a chain of subgroups
	$\Gamma'=\Gamma_0\supseteq\Gamma_1\supseteq\ldots\supseteq\Gamma_h=\Gamma$, where each $\Gamma_{i+1}$ consists of those members of $\Gamma_i$ which satisfy a suitably chosen additional condition,
	until $\Gamma_h=\Gamma$ satisfies all the defining conditions, and so it is the desired outcome.
	The important ingredient which makes this procedure work efficiently is that the ratio of orders (sizes) of consequent groups $\Gamma_i$ and $\Gamma_{i+1}$ in the chain is always bounded.
	The number $h$ of steps should also not be too large.
	In this setting, each of the refinement steps can be done using another classical result:
	
	\begin{theorem}[{Furst, Hopcroft and Luks \cite[Cor.~1]{furst}}]\label{thm:furstgen}~\\
		Let $\Pi$ be a permutation group given by its generators, and $\Pi_1$ be any subgroup of $\Pi$ such that one can test in
		polynomial time whether $\pi\in\Pi_1$ for any $\pi\in\Pi$ (membership test).
		If the ratio $|\Pi|/|\Pi_1|$ is bounded by a function of a parameter $d$, 
		then a set of generators of $\Pi_1$ can be computed in \textbf{FPT}-time (with respect to~$d$).
	\end{theorem}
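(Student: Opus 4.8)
\textbf{Proof plan for \Cref{thm:furstgen}.}
The plan is to prove this by the classical Schreier-generator construction together with the basic polynomial-time permutation-group machinery (Schreier--Sims sifting) that is also due to~\cite{furst}. Write $\Pi=\langle S\rangle$ with $S$ the given generating set (replacing $S$ by $S\cup S^{-1}$ if needed), let $n$ be the size of the permutation domain, and let $f(d)$ be the given bound on the index, $[\Pi:\Pi_1]\le f(d)$. Group elements are carried around as permutations of the $n$ points, so each product or inverse costs $O(n)$ time. The idea is: enumerate the (few) right cosets of $\Pi_1$ in $\Pi$ using the membership oracle, read off a transversal, convert it into Schreier generators of $\Pi_1$, and finally shrink the result to polynomial size.

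First I would enumerate the right cosets by an orbit-style search on the coset space $\Pi_1\backslash\Pi$ under the right action of $S$. Maintain a list $R$ of coset representatives, initialised to $R=\{\mathrm{id}\}$. Repeatedly take an unprocessed pair $(r,s)\in R\times S$, form the permutation $rs$, and identify its coset: for each $r'\in R$ test, using the given polynomial-time membership test for $\Pi_1$, whether $rs(r')^{-1}\in\Pi_1$; if some $r'$ passes, then $rs\in\Pi_1 r'$, and otherwise append $rs$ to $R$ as the representative of a newly discovered coset. Because $\langle S\rangle=\Pi$, the right action of $S$ on $\Pi_1\backslash\Pi$ is transitive, so starting from the coset $\Pi_1=\Pi_1\cdot\mathrm{id}$ this search reaches every coset; hence it halts with $|R|=[\Pi:\Pi_1]\le f(d)$. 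The work is at most $|R|\,|S|\le f(d)\,|S|$ permutation products and at most $|R|^2\,|S|\le f(d)^2\,|S|$ membership tests, each polynomial in the input size, so the enumeration runs in FPT time with respect to~$d$.

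Next I would invoke Schreier's lemma: with $R$ the right transversal produced above (and $\mathrm{id}\in R$ representing $\Pi_1$), and with $\overline{g}\in R$ denoting the representative of $\Pi_1 g$ — already computed in the previous step for every $g=rs$ — the Schreier generators $\{\,rs\,\overline{rs}^{-1}\;:\;r\in R,\ s\in S\,\}$ generate $\Pi_1$. Each of these at most $|R|\,|S|\le f(d)\,|S|$ elements lies in $\Pi_1$ by construction. This generating set may be larger than desired (and in the tower-of-groups application such blow-ups would compound across the $h$ refinement steps), so I would finally feed it into the Schreier--Sims algorithm which, by the basic polynomial-time results of~\cite{furst}, computes from any list of generators a base and strong generating set of size $O(n^2)$ in time polynomial in $n$ and the number of input generators. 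Outputting that strong generating set for $\Pi_1$ completes the algorithm, and the overall running time is FPT with respect to~$d$.

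I expect the main obstacle to be the correctness and termination of the coset enumeration when only a membership oracle for $\Pi_1$ is available: one has to argue that the right action of the generators $S$ on the (a priori unknown) coset space is transitive, so that the breadth-first search from the trivial coset reaches all of them, and that identifying the coset of a freshly formed element $rs$ genuinely needs only $O([\Pi:\Pi_1])$ oracle calls rather than a scan of the (possibly exponentially large) group. Both points become routine once one argues consistently at the level of cosets rather than individual elements and uses that $\langle S\rangle=\Pi$; the remaining bookkeeping (Schreier's lemma, the $O(n)$-time permutation arithmetic, and the cited polynomial-time Schreier--Sims reduction) is standard.
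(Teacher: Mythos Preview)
The paper does not prove \Cref{thm:furstgen}; it merely cites it as a classical result of Furst, Hopcroft and Luks and uses it as a black box in the tower-of-groups refinement. So there is no ``paper's own proof'' to compare against.

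That said, your proposal is the standard and correct argument for this statement: enumerate the at most $f(d)$ right cosets of $\Pi_1$ in $\Pi$ by an orbit search on $\Pi_1\backslash\Pi$ under right multiplication by the generators $S$ (using the membership oracle to identify cosets), obtain a transversal $R$, apply Schreier's lemma to produce the $\le f(d)\,|S|$ Schreier generators of $\Pi_1$, and then sift via Schreier--Sims to cut the generating set back to polynomial size. The transitivity argument, the $O(f(d)^2\,|S|)$ bound on oracle calls, and the FPT bookkeeping are all sound. This is essentially the content of the cited corollary in~\cite{furst}, phrased in the coset-enumeration language rather than their original formulation, so nothing is missing.
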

	
	The collected ingredients show a clear road to computing the subgroup $\Gamma\subseteq\Gamma'$ from the above outlined proof strategy for \Cref{thm:automsetfam}.
	In every intermediate step of a chain $\Gamma'=\Gamma_0\supseteq\Gamma_1\supseteq\cdots\supseteq\Gamma_h=\Gamma$,
	we assume that the whole family $\ca U$ is not Venn-good (with some generator of $\Gamma_i$), and we use Lemma~\ref{lem:Venngood}
	to argue that there exists a small subfamily $\ca U_2$ of $\ca U$ which is not Venn-good with some generator of $\Gamma_{i-1}$.
	Such $\ca U_2$ can be found efficiently with a little trick (note that a brute-force approach would not give an FPT-time algorithm here), 
	and then we define and compute the next group $\Gamma_{i}$ as that of permutations for which $\ca U_2$ is Venn-good with \Cref{thm:furstgen}.
	In this way, we eventually arrive at the group $\Gamma$ such that $\ca U$ is Venn-good with every generator of~$\Gamma$.
	The details follow.

	\begin{proof}[Proof of \Cref{thm:automsetfam}]
		Recall that, for an instance of {\sc AutomSimpleSET}$(X;$ $\ca U^1,\ldots,\ca U^m)$ over an $n$-element set $X$ and for every $j\in\{1,\ldots,m\}$,
		we have defined a refined partition $\ca U^j=\bigcup_{c=1}^n\ca U^j_c$ where $\ca U^j_c$ consists of the sets from $\ca U^j$ of cardinality~$c$.
		Since the maximum antichain size of $\ca U$ is $a$, we have that each~$\ca U^j_c$ contains at most $a$ distinct sets.
		For simplicity, we may assume $a\geq2$ since the case of $a=1$ is trivial.
		Let $\ca W:=\{\ca U^j_c: 1\leq j\leq m, 1\leq c\leq n,\, \ca U^j_c\not=\emptyset\}$ be a system of all these nonempty families.
		
		\begin{algorithm}[tb]
			\caption{One ($i$-th) step of the computation of the subgroup $\Gamma\subseteq\Gamma'$}
			\label{alg:Gammaprime}
			\begin{algorithmic}[1]\smallskip
				\Require a set family $\mathcal{U}$ of maximum antichain size~$a\in\mathbb N$, a partition $\ca W\subseteq2^{\ca U}$ of $\ca U$ into parts of size~$\leq a$,
				and a group $\Gamma_{i-1}$ (via a generator set) of permutations of $\ca U$ which set-wise stabilizes every part in~$\ca W$.
				
				\Ensure either a certificate that $\mathcal{U}$ is Venn-good with every generator of $\Gamma_{i-1}$; or
				
				\noindent a subgroup $\Gamma_i\subsetneq\Gamma_{i-1}$ (via a generator set) such that, for $\mathcal{T}\subseteq\mathcal{U}$ which is the union of some at most $a$ parts of~$\ca W$,~
				$\mathcal{T}$ is Venn-good precisely with every member of $\Gamma_i$ (and not with members of $\Gamma_{i-1}\setminus\Gamma_i$).
				\smallskip
				
				\State Let $\ca W=\{\ca W_1,\ca W_2,\ldots,\ca W_k\}$
				\State $\ca T \leftarrow \emptyset$
				\Repeat{~for $p:=1,2,\ldots,a\,$}:
				\Repeat{~for $q:=1,2,\ldots,k+1\,$}:
				\If{$q>k$}
				\Return ``\,$\mathcal{U}$ is Venn-good with all generators of $\Gamma_{i-1}$'';
				\label{it:allgood}\EndIf{}
				\State $\ca T_1 \leftarrow $ $(\ca W_1\cup\ca W_2\cup\dots\cup\ca W_q)$ $\cup$ $\ca T$;
				\Until{ $\ca T_1$ is not Venn-good (cf.~Lemma~\ref{lem:preciseVenn}) with some generator of $\Gamma_{i-1}$};
				\State $j_{p} \leftarrow q$;
				\State $\ca T \leftarrow $ $\ca W_{j_1}\cup\ca W_{j_2}\cup\dots\cup\ca W_{j_{p}}$;
				
				\Until{$j_p=1$ or $p=a$ };\label{it:leavecyc}
				
				\State Call the algorithm of Theorem~\ref{thm:furstgen} to compute the subgroup $\Gamma_i\subseteq\Gamma_{i-1}$, such that the
				membership test of $\varrho\in\Gamma_i$ checks whether $\ca T$ is Venn-good with $\varrho$ (again~Lemma~\ref{lem:preciseVenn});
				\label{it:callFurst}
				\State\Return { $\Gamma_i$ }
			\end{algorithmic}
		\end{algorithm}
		
		To solve the given instance, we start with the group $\Gamma'=\Gamma_0$ of permutations of $\ca U$ which is the direct product of the symmetric groups on all parts of~$\ca W$.
		For $i:=1,2,\ldots$, we iteratively call \Cref{alg:Gammaprime} with $\ca U$, $\ca W$ and $\Gamma_{i-1}$, 
		until the outcome (in $h$-th step) is that $\ca U$ is Venn-good with all generators of~$\Gamma_h$.
		Note that the latter immediately gives that $\ca U$ is Venn-good with all permutations in~$\Gamma_h$.
		By the assurance of \Cref{alg:Gammaprime}, $\Gamma=\Gamma_h$ contains all permutations for which $\ca U$ is Venn-good
		and indeed is the desired solution of {\sc AutomSimpleSET}$(X;\,\ca U^1,\ldots,\ca U^m)$.
		
		So, it remains to finish two things; analyze and prove one call to Algorithm~\ref{alg:Gammaprime}, and prove that the number $h$ of steps is finite and not ``too large''.
		We start with the former. If $\mathcal{U}$ is Venn-good for every generator of $\Gamma_{i-1}$, then we find this already in the first iteration of $p=1$, on line \ref{it:allgood}.
		Hence we may further assume that $\mathcal{U}$ is not Venn-good.
		By Lemma~\ref{lem:Venngood}, there exists a subfamily $\ca U_2\subseteq\ca U$ which is also not Venn-good, and $|\ca U_2|\leq\max(a,2)\leq a$ since $\ca U_2$ is an antichain.
		Note that $\ca U_2$ thus intersects at most $a$ parts of $\ca W$, which implies that there exist at most $a$ parts of $\ca W$ whose union is not Venn-good.
		
		Let $k\geq j'_1>j'_2>\dots>j'_r\geq1$ be an index sequence of length $r\leq a$ such that the subfamily $\ca W_{j'_1}\cup\ca W_{j'_2}\cup\dots\cup\ca W_{j'_r}$ is not Venn-good with some generator of $\Gamma_{i-1}$,
		and the vector $(j'_1,j'_2,\ldots,j'_r)$ is lexicographically minimal of these properties.
		Then one can straightforwardly verify that $(j'_1,j'_2,\ldots,j'_r)$ is a prefix of (or equal to) the vector $(j_1,j_2,\ldots,j_p)$ computed by \Cref{alg:Gammaprime}.
		Consequently, the collection $\ca T$, when leaving the cycle on line \ref{it:leavecyc}, is not Venn-good with some generator of $\Gamma_{i-1}$.
		We look at the subset $\Gamma_i\subseteq\Gamma_{i-1}$ defined as on line~\ref{it:callFurst}. In particular, $\Gamma_i\subsetneq\Gamma_{i-1}$.
		The important point is that, as $\Gamma_{i-1}$ set-wise stabilizes every part of~$\ca W$,\, $\Gamma_i$ is closed under composition of permutations, and so it forms a subgroup as expected by the algorithm.
		
		Next, we verify the fulfillment of the assumptions of \Cref{thm:furstgen}.
		Generators of $\Gamma_{i-1}=\Pi$ have been given to \Cref{alg:Gammaprime}.
		The ratio $|\Pi|/|\Pi_1|$, where $\Pi_1=\Gamma_i$ in our case, can be bounded as follows (despite we do not know $\Gamma_i$ yet):
		by standard algebraic arguments, $|\Gamma_{i-1}|/|\Gamma_{i}|$ equals the number of distinct cosets of the subgroup $\Gamma_{i}$ in $\Gamma_{i-1}$.
		If we consider two automorphisms $\alpha,\beta\in\Gamma_{i-1}$ which are equal when restricted to $\ca T$ (recall that they set-wise stabilize $\ca T$), 
		then the automorphism $\alpha^{-1}\beta$ determines a permutation of $\mathcal{U}$ which is identical on $\ca T$ (so it is Venn-good with~$\alpha^{-1}\beta$), and hence $\alpha^{-1}\beta\in\Gamma_{i}$.
		The latter means that $\alpha$ and $\beta$ belong to the same coset of $\Gamma_{i}$, and consequently, the number of distinct cosets 
		is at most the number of distinct subpermutations on $\ca T$ possibly induced by $\Gamma_{i-1}$, that is at most $(a!)^{a}$
		(at most the symmetric group on each of at most $a$ parts of $\ca W$ which form~$\ca T$).
		Therefore, we can finish one iteration on line \ref{it:callFurst} in \textbf{FPT}-time with respect to~$d=a$ by \Cref{thm:furstgen}.
		
		Lastly, we estimate the number of steps $h$ in the refinement process $\Gamma'=\Gamma_0\supsetneq\Gamma_1\supsetneq\ldots\supsetneq\Gamma_h=\Gamma$.
		By Lagrange's group theorem, $|\Gamma_{i}|$ divides $|\Gamma_{i-1}|$, and since $\Gamma_{i-1}\not=\Gamma_i$, we have $|\Gamma_{i}|\leq\frac12|\Gamma_{i-1}|$.
		Hence the number of strict refinement steps in our chain of subgroups is $h\leq\log_2|\Gamma'|$.
		Since trivially $|\Gamma'|\leq(a!)^n$ (where~$n=|X|$), we get $h=\mathcal{O}(na\log a)$.
		Therefore, the overall computation of the resulting group $\Gamma$ of the problem instance of {\sc AutomSimpleSET}$(X;\,\ca U^1,\ldots,\ca U^m)$
		is finished in FPT-time with respect to the parameter $a$, the maximum antichain size of the set family~$\ca U$.
		We remark that the only steps in our algorithm which require FPT-time (i.e., are possibly not of polynomial-time) are the calls to the algorithm of \Cref{thm:furstgen}.
	\end{proof}

	\section{Handling PQ-trees of marked interval graphs}\label{sec:setstoint}

	In this section, we provide the proof -- an FPT algorithm, for \Cref{thm:autommarked}.
	For this purpose, we revisit the PQ-trees of \Cref{sec:intPQ} from the point of view of marked interval graphs of Definition~\ref{def:autommarked}.
	While efficient handling of PQ-trees with an arbitrary marking seems basically infeasible, we deal with the special assumption of bounded antichains which appears very helpful within PQ-trees.
	Informally, we can say that the marked sets under this assumptions affect only very small part of the whole PQ-tree of our marked graph~$G$.

	Recall that we have got an interval graph $G$, families $\ca A^1,\ldots,\ca A^m$ of nonempty (marked) subsets of $V(G)$ 
	such that every set $A\in\ca A^i$ where $i\in\{1,\ldots,m\}$ induces a clique of~$G$, and $\ca A:=\ca A^1\cup\ldots\cup\ca A^m$.
	Let the maximum antichain size in $\ca A$ be~$a$.
	Let $T$ be a PQ-tree of~$G$, and recall what are inner vertices of $G$ assigned to nodes of~$T$ and their rank from Section~\ref{sec:intPQ}.
	We call a node $p$ of $T$ {\em clean} if the inner vertices assigned to $p$ are disjoint from $\bigcup\ca A$ (that is, not belonging to any marked set).
	The subtree rooted at $p$ is then {\em clean} if $p$ and all descendants of $p$ in $T$ are clean.
	It can be shown that the number of ``incomparable'' non-clean subtrees is bounded by the antichain size, but we skip the partial details since we actually prove much more in Lemma~\ref{lem:markedtosets}.
	See an illustration in~\Cref{fig:lemma13}.

	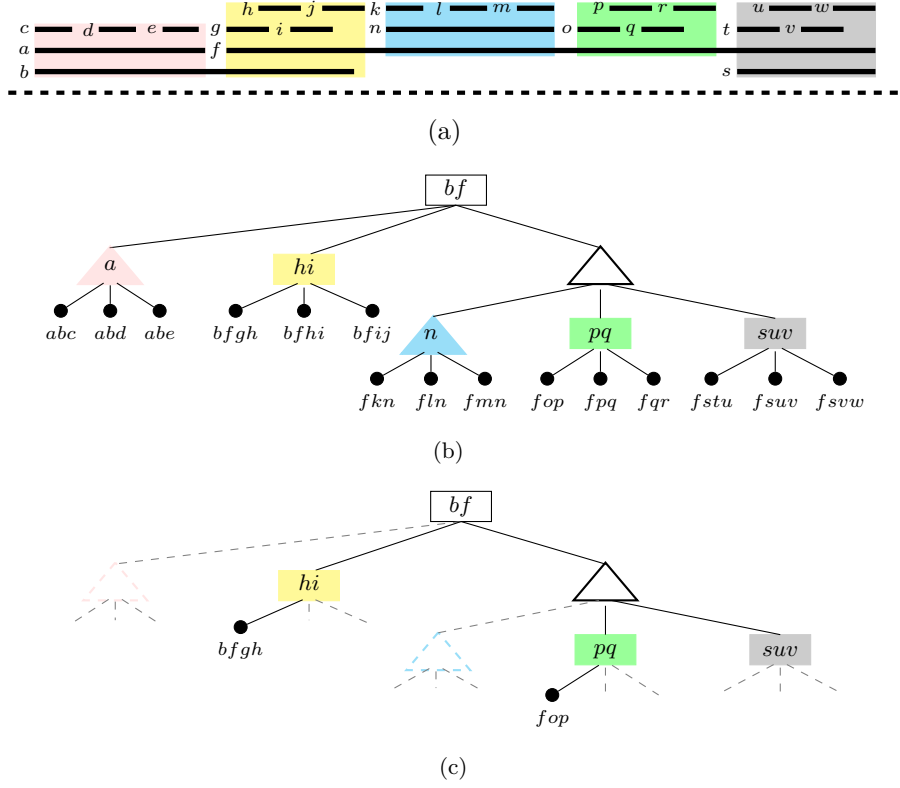
\begin{figure}[tb]
		\centering
		\begin{subfigure}[t]{1\linewidth}
			\centering
			\begin{tikzpicture}[scale=1.4]
				
				\draw [draw=pink, fill=pink, opacity=0.4] (-4,0.15) rectangle (-2.4,0.65);
				
				\draw [draw=yellow, fill=yellow, opacity=0.4] (-2.2,0.15) rectangle (-0.9,0.85);
				
				\draw [draw=cyan, fill=cyan, opacity=0.4] (-0.7,0.85) rectangle (0.88,0.35);
				
				\draw [draw=green, fill=green, opacity=0.4] (1.1,0.35) rectangle (2.4,0.85);
				
				\draw [draw=gray, fill=gray, opacity=0.4] (2.6,0.85) rectangle (3.9,0.15);
				
				\draw[black, dashed, ultra thick] (-4.25,0) -- (4.15,0);
				
				\draw[black, line width=2.0] (-4,0.2) -- (-1,0.2);
				\node (b) at (-4.1,0.2) {\scriptsize{$b$}};
				
				\draw[black, line width=2.0] (-4,0.4) -- (-2.4,0.4);
				\node (a) at (-4.1,0.4) {\scriptsize{$a$}};
				
				\draw[black, line width=2.0] (-4,0.6) -- (-3.65,0.6);
				\node (c) at (-4.1,0.6) {\scriptsize{$c$}};
				
				\draw[black, line width=2.0] (-3.4,0.6) -- (-3.05,0.6);
				\node (d) at (-3.5,0.6) {\scriptsize{$d$}};
				
				\draw[black, line width=2.0] (-2.8,0.6) -- (-2.46,0.6);
				\node (e) at (-2.9,0.6) {\scriptsize{$e$}};
				
				\draw[black, line width=2.0] (-1.9,0.8) -- (-1.5,0.8);
				\node (h) at (-2,0.8) {\scriptsize{$h$}};
				
				\draw[black, line width=2.0] (-1.3,0.8) -- (-0.9,0.8);
				\node (j) at (-1.4,0.8) {\scriptsize{$j$}};
				
				\draw[black, line width=2.0] (-2.2,0.6) -- (-1.8,0.6);
				\node (g) at (-2.3,0.6) {\scriptsize{$g$}};
				
				\draw[black, line width=2.0] (-1.6,0.6) -- (-1.2,0.6);
				\node (i) at (-1.7,0.6) {\scriptsize{$i$}};
				
				\draw[black, line width=2.0] (-2.2,0.4) -- (3.9,0.4);
				\node (f) at (-2.3,0.4) {\scriptsize{$f$}};
				
				\draw[black, line width=2.0] (-0.7,0.8) -- (-0.35,0.8);
				\node (k) at (-0.8,0.8) {\scriptsize{$k$}};
				
				\draw[black, line width=2.0] (-0.1,0.8) -- (0.25,0.8);
				\node (k) at (-0.2,0.8) {\scriptsize{$l$}};
				
				\draw[black, line width=2.0] (0.53,0.8) -- (0.88,0.8);
				\node (m) at (0.39,0.8) {\scriptsize{$m$}};
				
				\draw[black, line width=2.0] (-0.7,0.6) -- (0.88,0.6);
				\node (n) at (-0.8,0.6) {\scriptsize{$n$}};
				
				\draw[black, line width=2.0] (1.1,0.6) -- (1.5,0.6);
				\node (o) at (1,0.6) {\scriptsize{$o$}};
				
				\draw[black, line width=2.0] (1.4,0.8) -- (1.8,0.8);
				\node (p) at (1.3,0.8) {\scriptsize{$p$}};
				
				\draw[black, line width=2.0] (1.7,0.6) -- (2.1,0.6);
				\node (q) at (1.6,0.6) {\scriptsize{$q$}};
				
				\draw[black, line width=2.0] (2,0.8) -- (2.4,0.8);
				\node (r) at (1.9,0.8) {\scriptsize{$r$}};
				
				\draw[black, line width=2.0] (2.6,0.2) -- (3.9,0.2);
				\node (s) at (2.5,0.2) {\scriptsize{$s$}};
				
				\draw[black, line width=2.0] (2.6,0.6) -- (3,0.6);
				\node (t) at (2.5,0.6) {\scriptsize{$t$}};
				
				\draw[black, line width=2.0] (2.9,0.8) -- (3.3,0.8);
				\node (u) at (2.8,0.8) {\scriptsize{$u$}};
				
				\draw[black, line width=2.0] (3.2,0.6) -- (3.6,0.6);
				\node (v) at (3.1,0.6) {\scriptsize{$v$}};
				
				\draw[black, line width=2.0] (3.5,0.8) -- (3.9,0.8);
				\node (w) at (3.4,0.8) {\scriptsize{$w$}};
				
				\node (111) at (-0.15,-0.38) {(a)};	
				
				\label{a}
			\end{tikzpicture}
			
			\label{fig:intervals}
		\end{subfigure}
		~
		\begin{subfigure}[t]{1\linewidth}
			\centering
			\begin{tikzpicture}[yscale=0.5]\footnotesize
				
				\draw [draw=black] (4.2,-0.9) rectangle (5,-1.7);
				\node[label=below:{$\scriptsize{bf}$}] at (4.6,-0.62) [circle, draw, opacity=0, inner sep=0.6mm] (rbf) {};
				
				\node at (-0.05,-2.85) [circle, draw, opacity=0, inner sep=0.6mm] (L) {};
				\draw (4.6,-1.7) -- (L) node[midway, above, 	thick]{};
				
				\node at (2.6,-3.05) [circle, draw, opacity=0, inner sep=0.6mm] (M) {};
				\draw (4.6,-1.7) -- (M) node[midway, above, 	thick]{};
				
				\node at (6.62,-2.9) [circle, draw, opacity=0, inner sep=0.6mm] (R) {};
				\draw (4.6,-1.7) -- (R) node[midway, above, 	thick]{};

				\node[isosceles triangle,
				isosceles triangle apex angle=80,
				draw,
				rotate=90,
				color=pink, 
				fill=pink, 
				thick,
				opacity=0.4,
				minimum size =0.5cm] (T1)at (0.05,-3.45){};
				
				\node[label=below:{$\scriptsize{a}$}] at (0.05,-2.75) [circle, draw, opacity=0, inner sep=0.6mm] (ab) {};
				
				\node at (0.05,-3.65) [draw, fill=black, 	opacity=0, color=black, inner sep=0.8mm] (abx) {};
				
				\node[label=below:{\scriptsize${abc}$}] at (-0.6,-4.5) [circle, draw, fill=black, opacity=1, color=black, inner sep=0.6mm] (abc) {};
				\draw (abx) -- (abc) node[midway, above, 	thick]{}; 
				
				\node[label=below:{\scriptsize${abd}$}] at (0.05,-4.5) [circle, draw, fill=black, opacity=1, color=black, inner sep=0.6mm] (abd) {};
				\draw (abx) -- (abd) node[midway, above, 	thick]{};
				
				\node[label=below:{\scriptsize${abe}$}] at (0.7,-4.5) [circle, draw, fill=black, opacity=1, color=black, inner sep=0.6mm] (abe) {};
				\draw (abx) -- (abe) node[midway, above, 	thick]{};
				
				\draw [draw=yellow, fill=yellow, opacity=0.4] (2.2,-3.8) rectangle (3,-3);
				
				\node[label=below:{$\scriptsize{hi}$}] at (2.6,-2.7) [circle, draw, fill=black, opacity=0, color=black, inner sep=0.6mm] (bf) {};
				
				\node at (2.6,-3.7) [circle, draw, fill=black, opacity=0, color=black, inner sep=0.6mm] (bfx) {};
				
				\node[label=below:{\scriptsize${bfgh}$}] at (1.7,-4.5) [circle, draw, fill=black, opacity=1, color=black, inner sep=0.6mm] (bfgh) {};
				\draw (bfx) -- (bfgh) node[midway, above, 	thick]{};
				
				\node[label=below:{\scriptsize${bfhi}$}] at (2.6,-4.5) [circle, draw, fill=black, opacity=1, color=black, inner sep=0.6mm] (bfhi) {};
				\draw (bfx) -- (bfhi) node[midway, above, 	thick]{};
				
				\node[label=below:{\scriptsize${bfij}$}] at (3.5,-4.5) [circle, draw, fill=black, opacity=1, color=black, inner sep=0.6mm] (bfij) {};
				\draw (bfx) -- (bfij) node[midway, above, 	thick]{};
				
				\node[isosceles triangle,
				isosceles triangle apex angle=80,
				draw,
				rotate=90,
				color=black, 
				thick,
				opacity=1,
				minimum size =0.5cm] (T1)at (6.5,-3.45){};
				
				\node[label=below:{$\scriptsize{}$}] at (6.47,-2.7) [circle, draw, opacity=0, inner sep=0.6mm] (f) {};
				
				\node at (6.5,-3.76) [draw, fill=black, 	opacity=0, color=black, inner sep=0.8mm] (fx) {};
				
				\node[isosceles triangle,
				isosceles triangle apex angle=80,
				draw,
				rotate=90,
				color=cyan, 
				fill=cyan, 
				thick,
				opacity=0.4,
				minimum size =0.5cm] (T2)at (4.3,-5.3){};
				
				\node[label=below:{$\scriptsize{n}$}] at (4.27,-4.6) [circle, draw, opacity=0, inner sep=0.6mm] (fn) {};
				
				\node at (4.27,-5.5) [draw, fill=black, 	opacity=0, color=black, inner sep=0.8mm] (fnx) {};
				\draw (4.3,-4.65) -- (fx) node[midway, above, 	thick]{};
				
				\node[label=below:{\scriptsize${fkn}$}] at (3.56,-6.3) [circle, draw, fill=black, opacity=1, color=black, inner sep=0.6mm] (fk) {};
				\draw (fnx) -- (fk) node[midway, above, 	thick]{};
				
				\node[label=below:{\scriptsize${fln}$}] at (4.27,-6.3) [circle, draw, fill=black, opacity=1, color=black, inner sep=0.6mm] (fl) {};
				\draw (fnx) -- (fl) node[midway, above, 	thick]{}; 
				
				\node[label=below:{\scriptsize${fmn}$}] at (4.98,-6.3) [circle, draw, fill=black, opacity=1, color=black, inner sep=0.6mm] (fm) {};
				\draw (fnx) -- (fm) node[midway, above, 	thick]{};
				
				\draw [draw=green, fill=green, opacity=0.4] (6.1,-5.5) rectangle (6.9,-4.7);
				
				\node[label=below:{$\scriptsize{pq}$}] at (6.5,-4.55) [circle, draw, opacity=0, inner sep=0.6mm] (fpq) {};
				
				\node at (6.5,-5.4) [draw, fill=black, 	opacity=0, color=black, inner sep=0.8mm] (fpqx) {};
				\draw (6.5,-4.7) -- (fx) node[midway, above, 	thick]{};
				
				\node[label=below:{\scriptsize${fop}$}] at (5.8,-6.3) [circle, draw, fill=black, opacity=1, color=black, inner sep=0.6mm] (fno) {};
				\draw (fpqx) -- (fno) node[midway, above, 	thick]{};
				
				\node[label=below:{\scriptsize${fpq}$}] at (6.5,-6.3) [circle, draw, fill=black, opacity=1, color=black, inner sep=0.6mm] (fpq) {};
				\draw (fpqx) -- (fpq) node[midway, above, 	thick]{};
				
				\node[label=below:{\scriptsize${fqr}$}] at (7.2,-6.3) [circle, draw, fill=black, opacity=1, color=black, inner sep=0.6mm] (fqr) {};
				\draw (fpqx) -- (fqr) node[midway, above, 	thick]{};
				
				\draw [draw=gray, fill=gray, opacity=0.4] (8.4,-5.5) rectangle (9.2,-4.7);
				
				\node[label=below:{$\scriptsize{suv}$}] at (8.8,-4.6) [circle, draw, opacity=0, inner sep=0.6mm] (fsuv) {};
				
				\node at (8.8,-5.4) [draw, fill=black, 	opacity=0, color=black, inner sep=0.8mm] (fsuvx) {};
				
				\draw (8.8,-4.7) -- (fx) node[midway, above, 	thick]{};
				
				\node[label=below:{\scriptsize${fstu}$}] at (7.96,-6.3) [circle, draw, fill=black, opacity=1, color=black, inner sep=0.6mm] (fstu) {};
				\draw (fsuvx) -- (fstu) node[midway, above, 	thick]{};
				
				\node[label=below:{\scriptsize${fsuv}$}] at (8.8,-6.3) [circle, draw, fill=black, opacity=1, color=black, inner sep=0.6mm] (fsuv) {};
				\draw (fsuvx) -- (fsuv) node[midway, above, 	thick]{};
				
				\node[label=below:{\scriptsize${fsvw}$}] at (9.65,-6.3) [circle, draw, fill=black, opacity=1, color=black, inner sep=0.6mm] (fsvw) {};
				\draw (fsuvx) -- (fsvw) node[midway, above, 	thick]{};
				
				\node (11) at (4.5,-8.25) {(b)};		\label{b}
			\end{tikzpicture}
			
			\label{fig:PQ-treeT}
		\end{subfigure}
		~
		\begin{subfigure}[t]{1\linewidth}
			\centering
			\begin{tikzpicture}[yscale=0.5]\footnotesize
				
				\draw [draw=black] (4.2,-0.9) rectangle (5,-1.7);
				\node[label=below:{$\scriptsize{bf}$}] at (4.6,-0.62) [circle, draw, opacity=0, inner sep=0.6mm] (rbf) {};
				
				\node at (-0.05,-2.85) [circle, draw, opacity=0, inner sep=0.6mm] (L) {};
				\draw[dashed,gray] (4.6,-1.7) -- (L) node[midway, above, 	thick]{};
				
				\node at (2.6,-3.05) [circle, draw, opacity=0, inner sep=0.6mm] (M) {};
				\draw (4.6,-1.7) -- (M) node[midway, above, 	thick]{};
				
				\node at (6.62,-2.9) [circle, draw, opacity=0, inner sep=0.6mm] (R) {};
				\draw (4.6,-1.7) -- (R) node[midway, above, 	thick]{};
				
				\node[isosceles triangle,
				isosceles triangle apex angle=80,
				draw,dashed,
				rotate=90,
				color=pink, 
				thick,
				opacity=0.4,
				minimum size =0.5cm] (T1)at (0.05,-3.45){};

				\node at (0.05,-3.65) [draw, fill=black, 	opacity=0, color=black, inner sep=0.8mm] (abx) {};
				
				\draw[dashed,gray] (abx) -- (abc) node[midway, above, 	thick]{}; 
				
				\draw[dashed,gray] (abx) -- (abd) node[midway, above, 	thick]{};
				
				\draw[dashed,gray] (abx) -- (abe) node[midway, above, 	thick]{};
				
				\draw [draw=yellow, fill=yellow, opacity=0.4] (2.2,-3.8) rectangle (3,-3);
				
				\node[label=below:{$\scriptsize{hi}$}] at (2.6,-2.7) [circle, draw, fill=black, opacity=0, color=black, inner sep=0.6mm] (bf) {};
				
				\node at (2.6,-3.7) [circle, draw, fill=black, opacity=0, color=black, inner sep=0.6mm] (bfx) {};
				
				\node[label=below:{\scriptsize${bfgh}$}] at (1.7,-4.5) [circle, draw, fill=black, opacity=1, color=black, inner sep=0.6mm] (bfgh) {};
				\draw (bfx) -- (bfgh) node[midway, above, 	thick]{};
				
				\draw[dashed,gray] (bfx) -- (bfhi) node[midway, above, 	thick]{};
				
				\draw[dashed,gray] (bfx) -- (bfij) node[midway, above, 	thick]{};

				\node[isosceles triangle,
				isosceles triangle apex angle=80,
				draw,
				rotate=90,
				color=black, 
				thick,
				opacity=1,
				minimum size =0.5cm] (T1)at (6.5,-3.45){};
				
				\node[label=below:{$\scriptsize{}$}] at (6.47,-2.7) [circle, draw, opacity=0, inner sep=0.6mm] (f) {};
				
				\node at (6.5,-3.76) [draw, fill=black, 	opacity=0, color=black, inner sep=0.8mm] (fx) {};
				
				\node[isosceles triangle,
				isosceles triangle apex angle=80,
				draw,dashed,
				rotate=90,
				color=cyan, 
				thick,
				opacity=0.4,
				minimum size =0.5cm] (T2)at (4.3,-5.3){};

				\node at (4.27,-5.5) [draw, fill=black, 	opacity=0, color=black, inner sep=0.8mm] (fnx) {};
				\draw[dashed,gray] (4.3,-4.65) -- (fx) node[midway, above, 	thick]{};
				
				\draw[dashed,gray] (fnx) -- (fk) node[midway, above, 	thick]{};
				
				\draw[dashed,gray] (fnx) -- (fl) node[midway, above, 	thick]{}; 
				
				\draw[dashed,gray] (fnx) -- (fm) node[midway, above, 	thick]{};
				
				\draw [draw=green, fill=green, opacity=0.4] (6.1,-5.5) rectangle (6.9,-4.7);
				
				\node[label=below:{$\scriptsize{pq}$}] at (6.5,-4.55) [circle, draw, opacity=0, inner sep=0.6mm] (fpq) {};
				
				\node at (6.5,-5.4) [draw, fill=black, 	opacity=0, color=black, inner sep=0.8mm] (fpqx) {};
				\draw (6.5,-4.7) -- (fx) node[midway, above, 	thick]{};
				
				\node[label=below:{\scriptsize${fop}$}] at (5.8,-6.3) [circle, draw, fill=black, opacity=1, color=black, inner sep=0.6mm] (fno) {};
				\draw (fpqx) -- (fno) node[midway, above, 	thick]{};
				
				\draw[dashed,gray] (fpqx) -- (6.5,-6.3) node[midway, above, 	thick]{};
				
				\draw[dashed,gray] (fpqx) -- (7.2,-6.3) node[midway, above, 	thick]{};
				
				\draw [draw=gray, fill=gray, opacity=0.4] (8.4,-5.5) rectangle (9.2,-4.7);
				
				\node[label=below:{$\scriptsize{suv}$}] at (8.8,-4.6) [circle, draw, opacity=0, inner sep=0.6mm] (fsuv) {};
				
				\node at (8.8,-5.4) [draw, fill=black, 	opacity=0, color=black, inner sep=0.8mm] (fsuvx) {};
				
				\draw (8.8,-4.7) -- (fx) node[midway, above, 	thick]{};
				
				\draw[dashed,gray] (fsuvx) -- (fstu) node[midway, above, 	thick]{};
				
				\draw[dashed,gray] (fsuvx) -- (8.8,-6.3) node[midway, above, 	thick]{};
				
				\draw[dashed,gray] (fsuvx) -- (fsvw) node[midway, above, 	thick]{};
				
				\node (11) at (4.5,-8.25) {(c)};		\label{c}
			\end{tikzpicture}
			
			\label{fig:PQ-treeT}
		\end{subfigure}
		
		\caption{(a) Top: an interval graph $G$ on the vertex set $V(G)=\{a,b,\ldots,w\}$, where the vertices are labelled on the left of the intervals in its representation.
		(b) A PQ-tree $T$ of $G$ with the internal nodes (P-nodes are triangle-shaped and Q-nodes are rectangle-shaped) labelled with lists of the corresponding inner vertices (cf.~\Cref{sec:intPQ}), and with the leaves labelled by the maximal cliques~of~$G$. 
		(c) For an illustration, let $\ca A=\{\{b,f,g\},\{f,o,p\},\{s,u,v\}\}$ be a family of three marked sets in~$G$.
		Then the bottom picture shows the subtree $T'$ obtained from $T$ by discarding all clean subtrees (wrt.~$\ca A$, now only outlined in dashed lines); this is used in the proof of Lemma~\ref{lem:markedtosets}.}
		\label{fig:lemma13}
	\end{figure}

	Our answer to handling PQ-trees with marked sets is to precompute the complete isomorphism types of all clean subtrees in $T$,
	and then to introduce new marked sets (in addition to~$\ca A$) which encode the (few) non-clean subtrees of $T$.
	Importantly, the latter can be done without increasing the maximum antichain size of the marked sets.
	Altogether, this is formulated in detail as follows:
	
	\begin{lemma}\label{lem:markedtosets}
		Assume an instance of the problem {\sc AutomMarkedINT}$(G;\,\ca A^1,\ldots,\ca A^m)$, where the maximum antichain size of $\ca A:=\ca A^1\cup\ldots\cup\ca A^m$ equals~$a$.
		Then there are families $\ca B^1\cup\ldots\cup\ca B^{k}=\ca A$ of subsets of $V(G)$ where $(\ca B^1,\ldots,\ca B^{k})$ is a partition of $\ca A$ refining $(\ca A^1,\ldots,\ca A^{m})$,
		and another family $\ca C\subseteq2^{V(G)}$ of nonempty subsets of vertices where $\ca C$ is partitioned into families $(\ca C^1,\ldots,\ca C^\ell)$, such that the following holds.
		If a group $\Gamma$ is the solution to the problem {\sc AutomSET}$(V(G);$ $\ca B^1,\ldots,\ca B^k,\,\ca C^1,\ldots,\ca C^\ell)$, 
		then $\Gamma$ restricted to $\ca A$ is the sought solution of {\sc AutomMarkedINT}$(G;$ $\ca A^1,\ldots,\ca A^m)$.
		Furthermore, the maximum antichain size in $\ca A\cup\ca C^1\cup\ldots\cup\ca C^\ell$ is also $a$,
		and the families $\ca B^1,\ldots,\ca B^k$ and $\ca C^1,\ldots,\ca C^\ell$ {can be computed in linear time.}
	\end{lemma}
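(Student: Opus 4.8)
The plan is to reduce the given instance of {\sc AutomMarkedINT} to an instance of {\sc AutomSET} on the ground set $V(G)$, by encoding through additional colored marked sets $\ca C$ precisely the part of the canonical PQ-tree of $G$ that the pure set-family problem cannot recover on its own; the rest of the PQ-tree consists of subtrees disjoint from $\bigcup\ca A$, which carry no information about the marking and can be matched up freely once their isomorphism type is recorded. I begin by computing, in linear time by \Cref{cor:canonint}, a canonical PQ-tree $T$ of $G$ together with the inner-vertex assignment and the rank of every inner vertex (\Cref{sec:intPQ}). Call a node of $T$ \emph{clean} if its inner vertices avoid $\bigcup\ca A$, a subtree clean if all its nodes are, and let $T'$ be $T$ with every maximal clean subtree removed; for each removed subtree $S$ I record the canonical type $t_S$ of the ordered PQ-tree $S$ and the rank $r_S$ with which $S$'s root hangs below its parent in $T'$ (canonical encodings of ordered rooted trees are linear-time, cf.\ \cite{AHU}). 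Two structural facts will drive everything. (a) A marked clique $A\in\ca A$ lies inside a common maximal clique, so all nodes carrying an inner vertex of $A$ lie on a single root-to-leaf path of $T$, and comparable marked sets lie on compatible such paths; consequently the lowest non-clean nodes of $T$ biject with an inclusion-antichain of $\ca A$, so $T'$ has at most $a$ leaves. (b) If a P-node carries an inner vertex of $\bigcup\ca A$ then, by the consecutive-ones property, that vertex lies in every child of the node, so a non-clean P-node has only non-clean children; hence clean subtrees hang in $T'$ only below surviving Q-nodes and below clean P-nodes.

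Next I set up the reduction. Refine $(\ca A^1,\dots,\ca A^m)$ into $(\ca B^1,\dots,\ca B^k)$ by appending to the color of each $A\in\ca A$ a canonical descriptor of its location within $T'$ (the node of $T'$ to which $A$ is anchored, together with the ranks of the vertices of $A$ along the relevant path); this is an isomorphism-invariant refinement, so $(\ca B^1,\dots,\ca B^k)$ partitions $\ca A$ refining $(\ca A^1,\dots,\ca A^m)$ and is linear-time. I build $\ca C$ from three ingredients: the laminar skeleton $\{C_p : p\in V(T')\}$, where $C_p$ is the set of vertices belonging to the node $p$ (these force the ancestor--descendant order of $T'$ to be preserved); for each Q-node of $T'$, a few further sets whose cardinalities and colors encode its palindromic child-index structure (so that only reversals are allowed at Q-nodes); and, for each node $p$ of $T'$ and each symmetry class of clean subtrees hanging below $p$, the single set equal to the union of the vertex sets of those clean subtrees, colored by $(\text{canonical type of the context of }p,\, t,\, r)$. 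I partition $\ca C$ into its color classes $\ca C^1,\dots,\ca C^\ell$. Everything here is canonical in $(G,\ca A)$ and linear-time given the first step.

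For correctness, the easy inclusion is $\subseteq$: an $(\ca A^1,\dots,\ca A^m)$-preserving automorphism $\varrho$ of $G$ acts on $T$ as an equivalence transformation (\Cref{thm:recogIntervalLinear}), and since it fixes each $\ca A^i$ setwise it preserves the non-clean nodes, induces an automorphism of $T'$ preserving all recorded data, and therefore --- as a permutation of $V(G)$ --- maps each $\ca B^j$ and each $\ca C^j$ to itself; so its action on $\ca A\cup\ca C$ lies in $\Gamma$ and restricts to the required action on $\ca A$. The heart of the proof is the reverse inclusion. Let $\tau\in\Gamma$ come from some $\sigma\in\mathrm{Sym}(V(G))$ preserving all the colors. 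The skeleton sets $C_p$ force $\tau$ to induce an automorphism of the rooted tree $T'$; the Q-node sets force it to be a reversal-or-identity at each Q-node, hence an equivalence transformation of $T'$; and the clean-subtree-class sets together with the $\ca B$-colors force $\tau$ to send each clean subtree only to a clean subtree of the same type $t$ anchored compatibly, and each marked set only to a position with the same descriptor (here $|B|=|\tau(B)|$ ensures matching counts). I then extend this equivalence transformation of $T'$ to one of $T$ by choosing, for each maximal clean subtree $S$ and its image $S'$, a PQ-tree equivalence matching $S$ to $S'$ --- possible because they have the same type, and harmless because $S$ is disjoint from $\bigcup\ca A$. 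By \Cref{thm:recogIntervalLinear} the result is an automorphism $\varrho$ of $G$; it preserves each $\ca A^i$ since $\tau$ preserves $\ca B$-colors, and $\varrho(A)=\tau(A)$ for every $A\in\ca A$ follows from the position descriptors. Thus $\tau|_{\ca A}$ is realized, and $\Gamma|_{\ca A}$ is exactly the solution of {\sc AutomMarkedINT}.

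It remains to bound the maximum antichain of $\ca A\cup\ca C^1\cup\dots\cup\ca C^\ell$ by $a$ (it is at least $a$ trivially). The sets $C_p$ form a laminar family whose antichains correspond to antichains of nodes of $T'$, hence have size at most the number of leaves of $T'$, which is at most $a$ by fact (a); each clean-subtree-class set lies between the inner vertices of its anchor $p$ and $C_p$; and fact (b) confines clean and non-clean material below a common node to a controlled inclusion pattern. Using this one shows that any inclusion-antichain of $\ca A\cup\ca C$ can be pushed to an inclusion-antichain of $\ca A$ of at least the same size. This last balancing act is the main obstacle: $\ca C$ must be rich enough that every set-family automorphism of $\ca A\cup\ca C$ becomes rigid enough to lift to a genuine marking-preserving automorphism of $G$ (the reverse inclusion above), yet lean enough that it introduces no new inclusion-antichain beyond $a$; it is exactly facts (a) and (b) --- localization of marked cliques to single root-to-leaf paths, and the absence of clean children below non-clean P-nodes --- that make an essentially laminar ``skeleton plus tightly nested attachments'' encoding succeed on both counts.
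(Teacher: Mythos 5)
Your overall architecture closely parallels the paper's: build a (canonical) PQ-tree $T$ of $G$, define clean nodes/subtrees, pass to the pruned tree $T'$, introduce the skeleton sets $C_p$, and refine colors of $\ca A$ by location annotations. The two directions of the correctness proof are also along the right lines. However, there are two genuine problems.

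First, your structural fact (b) is false. A node is clean if \emph{its own inner vertices} are disjoint from $\bigcup\ca A$. If a P-node $p$ is non-clean, so some inner vertex $v$ of $p$ lies in a marked set and $v$ belongs to every child of $p$, this says nothing about the children's \emph{own} inner vertices --- a child can perfectly well be clean. So a non-clean P-node of $T'$ may still have clean subtrees hanging below it, which undermines the ``controlled inclusion pattern'' you invoke to bound antichains.

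Second, and more centrally, you put too much into $\ca C$ as actual sets. Beyond the skeleton sets $\{C_p:p\in V(T')\}$, you add ``a few further sets'' per Q-node to encode its palindromic index structure, and, for each node $p$ of $T'$ and each isomorphism class of clean subtrees below $p$, a union-of-vertex-sets member. These extra members are not part of the paper's construction, and they directly threaten the antichain bound: e.g., if a single node of $T'$ has many pairwise non-isomorphic clean subtrees below it, the corresponding class-union sets are pairwise incomparable and their number is not controlled by $a$. You recognize this yourself (``This last balancing act is the main obstacle'') and do not prove the bound. The paper's way around this is precisely to move \emph{all} of that information --- the canonical PQ-trees of the clean subtrees at $p$, and for Q-nodes the positions of the non-clean children in $T^o_q$ --- into the \emph{annotation}, i.e.\ into the color class of $C_p$ (and into the color classes of $(\ca B^1,\dots,\ca B^k)$ via the per-level, per-rank counts for each $A$). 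Colors do not create any new sets, so the antichain argument reduces to showing that adding only the laminar family $\{C_p\}$ does not increase the antichain size, which the paper does by a short exchange argument (replace each $C_q$ in a hypothetical large antichain by a witness $A_q\subseteq C_q$ from $\ca A$). To repair your proof you would have to either delete the extra sets and push their content into color classes as the paper does, or supply a separate, currently missing, argument that your enriched $\ca C$ keeps the antichain size at $a$.
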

	
	\begin{proof}
		Let $T$ be a PQ-tree of our interval graph $G$. 
		We first show how to refine the partition $(\ca A^1,\ldots,\ca A^{m})$ of $\ca A$, using an auxiliary annotation assigned to the sets of~$\ca A$.
		Observe that since every set $A\in\ca A$ induces a clique in $G$, the set $A$ cannot intersect the inner vertices of two nodes of $T$ which are incomparable in the tree order.
		This holds since $A$ must be a subset of some maximal clique of $G$, and so if $A$ intersects some inner vertex assigned to a node $q$, 
		the max clique containing $A$ must be among the descendant leaves of $q$ in~$T$ (it may be useful to imagine this fact in an actual interval representation of~$G$).
		
		Consequently, the nodes with assigned inner vertices from $A\in\ca A$ lie on a root-to-leaf path of $T$, and so it is sound (and automorphism-invariant) 
		to annotate $A$ with the numbers of its elements which are the inner vertices assigned to nodes at each level of $T$ from the root.
		For Q-nodes of $T$, we additionally annotate $A$ with the numbers of elements which are inner vertices of each rank at every level.
		Clearly, by uniqueness of a PQ-tree up to transformations, if an automorphism of $G$ maps $A\in\ca A$ into $A'\in\ca A$, then the annotations of $A$ and of $A'$ are the same.
		
		\smallskip
		Then we ``reduce'' the PQ-tree $T$ of $G$ to a tree $T'\subseteq T$ by discarding all clean subtrees (with all their nodes) from it. See~\Cref{fig:lemma13}(c).
		Moreover, for every node $q$ of $T'$, we denote by $T_q$ the subtree of $T$ rooted at $q$ and formed by all clean subtrees of~$q$.
		We compute by Corollary~\ref{cor:canonint} the canonical PQ-tree $T^o_q$ of the subgraph $G_q$ represented by~$T_q$, and store $T^o_q$ with the implied presentation of the (unlabelled) graph $G_q$ as an annotation of the node $q$ in~$T'$.
		If $q$ is a Q-node, we additionally store in the annotation of $q$ the appropriate positions of the non-clean subtrees of $q$ in $T$ within the order of the sons of $q$ in $T^o_q$.
		
		For $q\in V(T')$, we now define the set $C_q$ as the vertices of $G$ which belong to $q$ in $T$, i.e., $C_q$ equals the union of all maximal cliques (of~$G$) being in the descendant leaves of~$q$.
		Since $q$ belongs also to $T'$, it is not clean, and so there is a set $A\in\ca A$ intersecting the inner vertices assigned to~$q$.
		Observe that, whenever $A\in\ca A$ contains some of the inner vertices assigned to~$q$, then~$A\subseteq C_q$.
		This holds, as above, since $A$ is a subset of some maximal clique of $G$, and this clique containing $A$ must be among the descendant leaves of $q$ in~$T$.
		
		Let $\ca C:=\{C_q: q\in V(T')\}$. We claim that the maximum antichain size in $\ca A\cup\ca C$ is~$a$ (as in $\ca A$ itself).
		To prove this claim, let $\ca D\subseteq\ca A\cup\ca C$ be an antichain such that $|\ca D\cap\ca C|$ is the least possible among antichains of the same size.
		If $\ca D\subseteq\ca A$, we are done, and so let $C_q\in\ca D\cap\ca C$ for some node $q\in V(T')$.
		By the previous, there exists $A_q\in\ca A$ such that $A_q$ intersects some of the inner vertices of $G$ assigned to $q$ and $A_q\subseteq C_q$.
		Since $\ca D\ni C_q$ is an antichain, no set from $\ca D$ is contained in~$A_q$. 
		Assume that there is some $A\in\ca D\cap\ca A$ such that $A\supsetneq A_q$. Then $A$ contains an inner vertex of $q$, and hence $A\subseteq C_q$ which contradicts $\ca D$ being an antichain.
		Finally, assume that there is $C_{q'}\in\ca D\cap\ca C$, where $q'\in V(T')$, such that~$C_{q'}\supseteq A_q$.
		Then $q'$ is not a descendant of $q$ since $C_{q'}\not\subseteq C_q$ and, likewise, $q'$ is not an ancestor of~$q$.
		However, $C_{q'}$ contains an inner vertex of $A$ assigned to $q$ which contradicts the definition of inner vertices.
		Therefore, $(\ca D\setminus\{C_q\})\cup\{A_q\}$ is an antichain again. Overall, we get that indeed $\ca D\subseteq\ca A$, and so~$|\ca D|\leq a$.
		
		\smallskip
		We refine the partition $(\ca A^1,\ldots,\ca A^m)$ of $\ca A$ into wanted $(\ca B^1,\ldots,\ca B^{k})$ (for appropriate $k\geq m$) according to the above annotation;
		for $i=1,\ldots,m$, two sets $A_1,A_2\in\ca A^i$ fall into the same part -- color, of $(\ca B^1,\ldots,\ca B^{k})$ if and only if the annotations of $A_1$ and of $A_2$ in $T$ are the same.
		Likewise, we partition $\ca C$ into subfamilies $(\ca C^1,\ldots,\ca C^\ell)$ (with an arbitrary number $\ell$ of parts) as follows:
		two sets $C_q$ and $C_{q'}$ fall into the same part -- color, if and only if the nodes $q$ and $q'$ have received in $T'$ above the same annotation (in particular, the same canonical PQ-tree of the clean subtrees of~$q$~or~$q'$).
		
		We now verify the claimed properties of the defined partitions of vertex set families.
		Recall that the permutation group $\Gamma$ is the solution to {\sc AutomSET}$(V(G);\,\ca B^1,\ldots,\ca B^k,\,\ca C^1,\ldots,\ca C^\ell)$.
		Let a permutation group $\Delta$ be the solution of {\sc AutomMarkedINT}$(G;\,\ca A^1,\ldots,\ca A^m)$.
		
		For every permutation $\alpha\in\Delta$ of $\ca A$, there exists (by Definition~\ref{def:autommarked}) an underlying $(\ca A^1,\ldots,\ca A^m)$-preserving automorphism $\beta$ of~$G$.
		Then $\beta$ induces a permutation on the maximal cliques of $G$ which, in turn, gives an automorphism $\beta'$ of the PQ-tree~$T$.
		Note that in this context, naturally, an automorphism of a PQ-tree not only preserves the underlying rooted tree, but also the order of every Q-node up to reversal.
		As noted above, since $\beta$ is an automorphism, $\alpha$ must preserve the annotations of the sets of $\ca A$, i.e., the partition $(\ca B^1,\ldots,\ca B^{k})$.
		Since clean subtrees are also preserved by $\beta'$, the restriction of $\beta'$ gives an annotation-preserving automorphism of the tree~$T'$ by the definition of a canonical PQ-tree.
		Hence $\beta'$ induces a permutation of $\ca C$ respecting the partition $(\ca C^1,\ldots,\ca C^\ell)$, which composed together with $\alpha$ on $\ca A$ gives a unique permutation~$\gamma\in\Gamma$.
		
		Conversely, let $\gamma\in\Gamma$ be a permutation on~$\ca A\cup\ca C$ respecting both partitions $(\ca B^1,\ldots,\ca B^{k})$ and $(\ca C^1,\ldots,\ca C^\ell)$.
		In particular, every set $C_q\in\ca C^i\subseteq\ca C$, where $q\in V(T')$ and $1\leq i\leq\ell$, is mapped into $C_{r}=\gamma(C_q)\in\ca C^i$ where~$r\in V(T')$.
		Henceforth, $\gamma$ induces (as~`$q\mapsto r$') an annotation-preserving permutation $\beta_0$ of $V(T')$.
		Since we have, for any two nodes $p,p'\in V(T')$, that $p$ is an ancestor of $p'$ if and only if $C_p\supseteq C_{p'}$,
		and $\gamma$ preserves the inclusion relation, we conclude that $\beta_0$ is an annotation-preserving automorphism of~$T'$.
		Since our annotation at every node of $T'$ includes the canonical PQ-tree of the clean subtrees, and the ordering of the non-clean subtrees under Q-nodes, 
		$\beta_0$ extends to a permutation $\beta_1$ of whole $V(T)$ which hence is an automorphism of the whole PQ-tree~$T$.
		This gives an underlying automorphism $\beta$ of the graph~$G$. 
		Since, moreover, $\gamma$ preserves the annotations of the sets in~$\ca A$, the automorphism $\beta$ can be chosen such that it agrees with the permutation $\gamma$ on $\ca A$.
		We have got the restriction of $\gamma$ to $\ca A$ in the group~$\Delta$.
		
		\smallskip
		It remains to analyze the runtime of the described reduction, i.e., of the computation of the families $\ca B^1,\ldots,\ca B^k$ and $\ca C^1,\ldots,\ca C^\ell$.
		For that we first compute a PQ-tree $T$ of our interval graph $G$ in linear time by~\Cref{thm:recogIntervalLinear}.
		From $T$, we straightforwardly compute the anntations of the sets in $\ca A$, as specified above.
		Then we easily in linear time identify all clean subtrees of $T$, and hence get the tree~$T'$.
		The annotations of the nodes of $T'$ are computed again in linear time using Corollary~\ref{cor:canonint} applied to their clean subtrees,
		and since the considered clean subtrees in this computation are pairwise disjoint, the overal runtime is linear in the size of~$G$.
		Knowing the annotations, we then easily output the families $\ca B^1,\ldots,\ca B^k$ and $\ca C^1,\ldots,\ca C^\ell$.
	\end{proof}

	\begin{proof}[Proof of \Cref{thm:autommarked}]
		We first apply Lemma~\ref{lem:markedtosets}; in this way, we transform an instance of {\sc AutomMarkedINT}$(G;\,\ca A^1,\ldots,\ca A^m)$
		into an instance of {\sc AutomSET}$(V(G);$ $\ca B^1,\ldots,\ca B^k,$ $\ca C^1,\ldots,\ca C^\ell)$ of the same parameter value~$a$.
		Then, we use \Cref{thm:automsetfam} to solve the latter in FPT time with respect to~$a$, and straightforwardly output the
		corresponding restricted solution of {\sc AutomMarkedINT}$(G;\,\ca A^1,\ldots,\ca A^m)$.
	\end{proof}

	\section{Conclusions}

	We have introduced the problem {\sc AutomMarkedINT}$(G;\,\ca A^1,\ldots,\ca A^m)$ which is a clean and rigorous
	new general formulation of an algorithmic task previously used in isomorphism algorithms for special classes of chordal graphs 
	\cite{aaolu2019isomorphism,DBLP:conf/walcom/CagiriciH22}.
	We believe that this self-contained exposition of the solution of {\sc AutomMarkedINT}$(G;\,\ca A^1,\ldots,\ca A^m)$
	can be interesting and useful on its own, not only as a minor technical tool.
	For instance, we think it can be useful in development of an isomorphism algorithm for so-called $H$-graphs
	(these are intersection graphs of connected subgraphs in a suitable subdivision of the base graph $H$, and they naturally generalize interval graphs and chordal graphs of bounded leafage)
	in the case that $H$ contains one cycle (while for $H$ containing more than one cycle the problem is known to be GI-complete).

\bibliography{Union-bibliography}
	
\end{document}